\renewcommand\footnotetextcopyrightpermission[1]{} 
\pgfplotsset{compat=newest} 
\newcommand{\GF}[1]{\ensuremath{\mathbb{F}_{#1}}\xspace}
\newcommand{\Gal}[1]{\ensuremath{\mathrm{Gal}(#1)\xspace}}
\newcommand{\Orb}[1]{\ensuremath{\mathrm{Orb}_{#1}\xspace}}
\newcommand{\rd}[1]{\ensuremath{\lfloor \lg{#1} \rfloor }\xspace}
\newcommand{\binrd}[1]{\ensuremath{ 2^{\lfloor \lg{#1} \rfloor }}\xspace}
\newcommand{\binru}[1]{\ensuremath{ 2^{\lceil \lg{#1} \rceil }}\xspace}
\newcommand{\nexi}[1]{\ensuremath{\phi^{\circ #1}}\xspace}
\newcommand{\ifft}{\texttt{ifft}\xspace}
\newcommand{\AFT}{\ensuremath{\textit{AFT}_n\xspace}}
\newcommand{\FAFT}[1]{\ensuremath{\textit{FAFT}_{#1}\xspace}}
\newcommand{\lchbtfy}{\ensuremath{\texttt{AFFT}}\xspace}
\newcommand{\FAFFT}{\ensuremath{\texttt{FAFFT}}\xspace}
\newcommand{\novelpoly}{\emph{novel polynomial basis}\xspace}
\newcommand{\bvec}[1]{\ensuremath{\boldsymbol{#1}}\xspace}
\newtheorem{theorem}{Theorem}[section]
\theoremstyle{definition}
\newtheorem{definition}[theorem]{Definition}
\newif\ifsubmission
\begin{document}
\title{Frobenius Additive Fast Fourier Transform}

\author{Wen{-}Ding Li}
\affiliation{
  \institution{Research Center of Information Technology and Innovation, Academia Sinica, Taiwan}
}
\email{thekev@crypto.tw}

\author{Ming-Shing~Chen}
\affiliation{
  \institution {Research Center of Information Technology and Innovation, Academia Sinica, Taiwan}
}
\email{mschen@crypto.tw}

\author{Po-Chun Kuo}
\affiliation{
  \institution {Department of Electrical Engineering, National Taiwan University, Taiwan}
}
\email{kbj@crypto.tw}

\author{Chen-Mou~Cheng}
\affiliation{
   \institution{Department of Electrical Engineering, National Taiwan University, Taiwan}
}
\email{doug@crypto.tw}

\author{Bo-Yin~Yang}
\affiliation{
  \institution{Institute of Information Science, Academia Sinica, Taiwan}
}
\email{by@crypto.tw}


\begin{abstract}
  In ISSAC 2017, van der Hoeven and Larrieu showed that evaluating a
  polynomial $P\in\GF q[x]$ of degree $<n$ at all $n$-th roots of
  unity in \GF{q^d} can essentially be computed $d$-time faster than
  evaluating $Q\in\GF{q^d}[x]$ at all these roots, assuming $\GF{q^d}$
  contains a primitive $n$-th root of unity~\cite{ffft}.
  Termed the Frobenius FFT, this discovery has a profound impact on
  polynomial multiplication, especially for multiplying binary
  polynomials, which finds ample application in coding theory and
  cryptography.
  In this paper, we show that the theory of Frobenius FFT beautifully
  generalizes to a class of additive FFT developed by Cantor and
  Gao-Mateer~\cite{Cantor:1989,gao2010additive}.
  Furthermore, we demonstrate the power of Frobenius additive FFT for
  $q=2$: to multiply two binary polynomials whose product is of degree
  $<256$, the new technique requires only 29,005 bit operations, while
  the best result previously reported was 33,397.
  To the best of our knowledge, this is the first time that FFT-based
  multiplication outperforms Karatsuba and the like at such a low
  degree in terms of bit-operation count.
\end{abstract}

%
%
\begin{CCSXML}
<ccs2012>
<concept>
<concept_id>10002950.10003714.10003715.10003718</concept_id>
<concept_desc>Mathematics of computing~Computations in finite fields</concept_desc>
<concept_significance>500</concept_significance>
</concept>
</ccs2012>
\end{CCSXML}

\ccsdesc[500]{Mathematics of computing~Computations in finite fields}

\keywords{addtitive FFT, Frobenius FFT, polynomial multiplication}

\maketitle

\section{Introduction}

Let \GF{q^d} be the finite field of $q^d$ elements, and let
$\xi\in\GF{q^d}$ be a primitive $n$-th root of unity.
The (discrete) Fourier transform of a polynomial $P\in\GF{q^d}[x]$
with degree $<n$ is $(P(1),P(\xi),P(\xi^2),\ldots,P(\xi^{n-1}))$,
namely, evaluating $P$ at all $n$-th roots of unity.
How to efficiently compute the Fourier transform not only is an
important problem in its own right but also finds a wide variety of
applications.
As a result, there is a long line of research aiming to find what is
termed ``fast'' Fourier tranform, or FFT for short, for various
situations.

Arguably, one of the most important applications of FFT is fast
polynomial multiplication.
In particular, the case of $q=2$ has received a lot of attention from
the research communities due to its wide-ranging application, e.g., in
coding theory and cryptography.
Here we obviously need to go to an appropriate extension field
\GF{2^d} in order to obtain a primitive $n$-th root of unity for any
meaningful $n$, and in this case, it is well known that one can use
the Kronecker method to efficiently compute binary polynomial
multiplication~\cite{Cantor:1989,DBLP:conf/issac/HarveyHL16}.
Such FFT-based techniques have better asymptotic complexity compared
with school-book and Karatsuba algorithms.
However, it is conventional wisdom that FFT is not suitable for
polynomial multiplication of small degrees because of the large hidden
constant in the big-$\mathcal{O}$ notation~\cite{Fan::Haining::Hasan::Anwar:2015}.

We recall that the Frobenius map $x\mapsto x^q$ fixes \GF q in any of
its extension field \GF{q^d}, and hence
$\forall P\in\GF q[x],\forall a\in\GF{q^d},P(\phi(a))=\phi(P(a))$.
In ISSAC 2017, van der Hoeven and Larrieu showed how to use Frobenius
map to speed up the Fourier transform of $P\in\GF q[x]$ essentially by
a factor of $d$ over $Q\in\GF{q^d}[x]$ and hence avoid the
factor-of-two loss as in the Kronecker method~\cite{ffft}.
However, the Frobenius FFT is complicated, especially when the
Cooley-Tukey algorithm is used for a (highly) composite $n$.
One of the reasons behind might be that the Galois group of \GF{q^n}
over \GF q is generated by the Frobenius map and isomorphic to a
cyclic subgroup of the \emph{multiplicative} group of units of
$\mathbb Z/n\mathbb Z$, whereas the Cooley-Tukey algorithms works by
decomposing the \emph{additive} group $\mathbb Z/n\mathbb Z$.
The complicated interplay between these two group structures can bring
a lot of headaches to implementers.

In his seminal work, Cantor showed how to evaluate a polynomial in
some additive subgroups of a tower of Artin-Schreier extensions of a
finite field and gave an $O(n\lg^{\log_3 2}(n))$ FFT algorithm based
on polynomial division~\cite{Cantor:1989}.
An Artin-Schreier extension of a finite field \GF q of characteristic
$p$ is a degree-$p$ Galois extension of \GF q.
In this paper, we restrict our discussion to the case of $p=2$, but
most of the results can be extended to the case of general $p$.
Based on Cantor's construction, Gao and Mateer gave a
Cooley-Tukey-style algorithm whose complexity is $O(n\lg(n)\lg\lg(n))$
when $d$ is a power of two~\cite{gao2010additive}, using which
Chen~\emph{et al.}~ achieved competitive performance compared with other state of the art of binary polynomial
multiplication~\cite{DBLP:journals/corr/abs-1708-09746}.
As will become clear later in this paper, the theory of Frobenius FFT
beautifully generalizes to additive FFT developed by Cantor and
Gao-Mateer because the group that FFT works on comes from the same
Frobenius map.

Frobenius additive FFT is not only interesting in its own right but
can be useful in a variety of applications.
In particular, many techniques to reduce the number of bit operations
(AND and XOR) of binary polynomial multiplications of small degrees
were proposed in the
literature~\cite{DBLP:conf/crypto/Bernstein09,DBLP:journals/jce/CenkH15,DBLP:journals/tc/CenkNH13,DBLP:journals/tc/CenkHN14,DBLP:conf/sacrypt/GathenS05}.
Although the number of bit operation is not an accurate performance
predictor in modern CPU, it is still a useful metric for digital
circuit design or ``bitslice'' software technique in embedded device.
However, so far most of the techniques for small degrees were based on
Karatsuba algorithm or its generalization to $n$-way split.
By applying Frobenius additive Fourier transform instead of Kronecker
method, we show that we can break the record for the number of bit
operations even at the polynomial size 231.
To the best of our knowledge, it is the first time FFT-based method is
shown to be competitive in such small degrees.
We also implement a code generator to output procedures for
multiplying
two polynomials, publicly available at
\begin{center} 
  \url{https://github.com/fast-crypto-lab/Frobenius_AFFT}
\end{center}

The rest of this paper is organized as follows.
In Section~\ref{sec:preliminaries}, we will review the relevant
background information.
In Section~\ref{sec:faft}, we will define the Frobenius additive
Fourier transform and show some of its important properties.
In Section~\ref{sec:application}, we conclude by showing how we apply
Frobenius additive FFT to binary polynomial multiplication and achieve
a new record.

\section{Preliminaries}
\label{sec:preliminaries}
\subsection{Basis of finite field}
\label{sec:gf-ver3}
Let \GF{2^d} denote an binary extension field, and let
\[ \bvec{v}_d = (v_0,v_1,\ldots,v_{d-1}). \]
We call $\bvec{v}_d$ a basis for \GF{2^d} if $v_0,v_1,\ldots,v_{d-1}$
are linearly independent over \GF 2.
Throughout this paper, we often represent an element $\omega_i$ of a
binary extension field as
\[ \omega_i = i_0 v_0 + i_1 v_1 + \cdots + i_{m-1} v_{m-1}, \] where
$i = i_0 + 2i_1 + 2^2i_2 +\cdots + 2^{m-1}i_{m-1}$,
$i_j\in\{0,1\}\forall 0\leq j<m$, with the basis elements
$v_0,v_1,\ldots,v_{m-1}$ inferred from the context.

In his seminal work, Cantor presented a sequence of explicit and
computationally useful bases for binary extension
fields~\cite{Cantor:1989}.
\begin{definition}
\label{Cantor_basis}
  Given a sequence $u_0,u_1,u_2,\ldots$ of elements from the algebraic
  closure of \GF 2 satisfying
  \[ u_i^2 + u_i = (u_0u_1\cdots u_{i-1}) + [\text{a sum of monomials
      of lower degrees}], \] where each ``monomial of a lower degree''
  has the form $u_0^{j_0} u_1 ^{j_1} \cdots u_{i-1}^{j_{i-1}}$ such
  that $\forall 0\leq k<i,j_k\in\{0,1\}$ and $\exists k,j_k=0$.
  Then a Cantor basis $\bvec{v}_d = (v_0,\ldots,v_{d-1})$ for \GF{2^d}
  is defined as \[ v_i = u_0^{i_0}u_1^{i_1}\cdots u_{k-1}^{i_{k-1}} \]
  where $i = i_0 + 2i_1 + \cdots + 2^{k-1}i^{k-1}$ and $d = 2^k$.
\end{definition}

If we fix $\GF{2^{2^k}}=\GF 2(u_0,u_1,\ldots,u_{k-1})$ for
$k=1,2,\ldots$, then with Cantor's construction, we arrive at a tower
of Artin-Schreier extensions of \GF 2.
For example, the following tower of extension fields of \GF 2 are one
such construction:
\[
\begin{array}{rcl}
  \GF{4} & := & \GF{2}[u_0]/(u_0^2 + u_0 + 1),\\
  \GF{16} & := & \GF{4}[u_1]/(u_1^2+u_1+u_0), \\
  \GF{256} & := & \GF{16}[u_2]/(u_2^2+u_2+u_1u_0), \\
  \GF{65536} & := & \GF{256}[u_3]/(u_3^2+u_3+u_2u_1u_0),\\
         & \vdots &
\end{array}
\]
In this case, the Cantor basis for, e.g., \GF{65536} is
\[ \bvec{v}_{16}=(1,u_0,u_1,u_0u_1,
  u_2,u_0u_2,u_1u_2,\ldots,u_0u_1u_2u_3). \]
In this paper, we will focus on additive Fourier transform with
respect to Cantor bases.

\subsection{Finite field arithmetic}
We will use the bit complexity model for finite field arithmetic
unless stated otherwise.
We use $M_q(d)$ to denote the complexity of multiplication of
polynomials of degree $<d$ over \GF q.
Currently, the best known bound for $M_q(n)$ is
\[ M_q(d) = \mathcal{O}(d \log q \log(d \log q) 8^{\log^*(d \log q)}), \] where
$\log^*(\cdot)$ is the iterated logarithm
function~\cite{DBLP:journals/jacm/HarveyHL17}.
It is conventional to assume that $M_q(d)/d$ is an increasing function
of $d$.
We will denote $M(d)$ as the bit complexity to multiply two elements in $\GF{2^d}$ represented in Cantor basis.
Since we can use modular decomposition technique\cite{composition}
\cite{isomorphism} to convert $\GF{2^d}$ to $\GF{2}[x]$ and then perform polynomial multiplication with $\mathcal{O}(d \lg d)$.
So $M(d) = \mathcal{O}(M_2(d))$.
We also assume that $\frac{M(d)}{d}$ is an increasing function in $d$ for Cantor bases.
We use $A(d)$ to denote the complexity of addition for two elements in $\GF{2^d}$.
As usual, the complexity of adding two elements in $\GF{2^d}$ is as
$O(d)$.
Note that in some case, Cantor's construction allows more efficient multiplication.
For example, given
$\alpha, \beta \in \GF{2^{2^k}}:=\GF{2^{2^{k-1}}}[u_{k-1}]/(u_{k-1}^2
+ u_{k-1} + \zeta)$, if $\alpha$ happens to be in the (proper)
subfield $\GF{2^{2^{k-1}}}$, then multiplication of $\alpha$ and
$\beta$ can be computed using only two multiplications in
$\GF{2^{2^{k-1}}}$.
The cost of multiplication become $2M(2^{k-1})$ instead of $M(2^k)$.
%
%
%
As we shall see, we often multiply elements from different extension
fields of \GF 2, so Cantor's trick plays an important role in reducing
bit complexity.

\subsection{Additive Fourier Transform}
Let $\bvec{v}_d = (v_0, v_1, v_2, ..., v_{d-1})$ 
be a basis of $\GF{2^d}$. Let $n=2^m$ and $m \leq d$.
Now consider a polynomial $P \in \GF{2^d}[x]_{< n}$, where 
$$
\GF{2^d}[x]_{<n} := \{P \in \GF{2^d}[x]: \deg (P) < n\} 
$$
We will define the additive Fourier transform $\AFT(P)$ with respect
to a basis $\bvec{v}_d$ to be
$$
\AFT(P) = 
\big( P(\omega_0), P(\omega_1),P(\omega_2),...,P(\omega_{n-1}) \big)
$$
Recall that $\omega_i = \sum_{j=0}^{m-1} i_j \cdot v_j$, $i = \sum_{j=0}^{m-1} i_j \cdot 2^j$ and $i_j \in \{0,1\}$

\subsection{Subspace polynomial}

Consider a basis $\bvec{v} = (v_i)_{i=0}^{d-1}$ and all $v_i \in \GF{2^d}$.
Let $$W_k:=\mathrm{span}\{v_0,v_1,\ldots, v_{k-1}\}
=\{\sum_{j \in S} v_j | S \subseteq \{0,1,2,...,k-1\} \}$$

denote an $k$-dimensional subspace in $\GF{2^d}$, where $k\leq d$. These $W_k$ satisfies 
$$
\{0\} = W_0 \subset W_1 \subset \cdots \subset W_d = \GF{2^d} 
$$ and form a sequence of subspaces. We  define $W_{0} = \{0\}$ for convenience later.

\begin{definition}\label{def:si}Given a subspace $W_k$ of $\GF{2^d}$, the subspace polynomial is defined as
$$
 s_k(x) := \prod_{a \in W_k} (x-a) \enspace.
$$
\end{definition}





\begin{lemma}
$s_k(x)$ is a linearized polynomial:
$$s_k(x+y) = s_k(x) + s_k(y)$$ 
for all $x,y \in \GF{2^d}$
\end{lemma}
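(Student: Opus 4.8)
The plan is to argue by induction on $k$, the dimension of $W_k$. The base case $k=0$ is immediate: $W_0=\{0\}$, so $s_0(x)=x$, which is additive. For the inductive step, fix $k\ge 1$ and assume $s_{k-1}$ is additive on $\GF{2^d}$. Since $v_0,\dots,v_{d-1}$ are linearly independent over \GF 2, we have $v_{k-1}\notin W_{k-1}=\mathrm{span}\{v_0,\dots,v_{k-2}\}$, so $v_{k-1}+W_{k-1}$ is a coset disjoint from $W_{k-1}$ and
\[ W_k=W_{k-1}\,\sqcup\,(v_{k-1}+W_{k-1}). \]
Splitting the product defining $s_k$ along this partition (and using $x-v_{k-1}=x+v_{k-1}$ in characteristic $2$) gives the recurrence
\[ s_k(x)=s_{k-1}(x)\cdot s_{k-1}(x+v_{k-1}). \]

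Next I would feed in the induction hypothesis on the second factor: $s_{k-1}(x+v_{k-1})=s_{k-1}(x)+\beta$ with $\beta:=s_{k-1}(v_{k-1})$, so that
\[ s_k(x)=s_{k-1}(x)^2+\beta\,s_{k-1}(x)\qquad\text{for all }x\in\GF{2^d}. \]
The claim then follows by evaluating at $x+y$ and using that $s_{k-1}$ is additive together with the Frobenius identity $(a+b)^2=a^2+b^2$ valid in characteristic $2$:
\begin{align*}
s_k(x+y) &= \bigl(s_{k-1}(x)+s_{k-1}(y)\bigr)^{2}+\beta\bigl(s_{k-1}(x)+s_{k-1}(y)\bigr)\\
&= \bigl(s_{k-1}(x)^2+\beta s_{k-1}(x)\bigr)+\bigl(s_{k-1}(y)^2+\beta s_{k-1}(y)\bigr)\\
&= s_k(x)+s_k(y).
\end{align*}
This closes the induction.

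The argument is essentially routine, so I do not expect a real obstacle; the two steps worth a careful line are the coset decomposition $W_k=W_{k-1}\sqcup(v_{k-1}+W_{k-1})$, which is exactly where linear independence of the $v_i$ is used, and the additivity of squaring in characteristic $2$, which is what actually makes the statement true. As a bonus, the recurrence $s_k=s_{k-1}^2+\beta s_{k-1}$ with $\beta\neq 0$ (because $v_{k-1}$ is not a root of $s_{k-1}$) shows that $s_k$ is literally a linearized polynomial $\sum_i c_i x^{2^i}$ of degree $2^k$ — the explicit form that is exploited later in the paper.
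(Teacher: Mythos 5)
Your proof is correct. The paper itself gives no proof of this lemma --- it states the linearity fact and defers to Cantor's 1989 paper, and separately records (as a later lemma, specialized to Cantor bases where $s_{j-1}(v_{j-1})=1$) the recurrence $s_j(x)=s_{j-1}(x)\cdot s_{j-1}(x-v_{j-1})=s_{j-1}^2(x)+s_{j-1}(x)$. Your induction via the coset split $W_k=W_{k-1}\sqcup(v_{k-1}+W_{k-1})$, the resulting recurrence $s_k=s_{k-1}^2+\beta\,s_{k-1}$ with $\beta=s_{k-1}(v_{k-1})\neq 0$, and the additivity of squaring in characteristic $2$ is the standard argument; it holds for an arbitrary basis (not only Cantor's), and setting $\beta=1$ recovers exactly the Cantor-basis recurrence the paper invokes later.
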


As in \cite{Cantor:1989} \cite{gao2010additive} \cite{Lin:BasisCvt:2016}, we will consider Cantor bases to construct an efficient algorithm.
For the rest of this subsection, we list properties of subspace polynomial with respect to Cantor bases. These properties were proven in \cite{Cantor:1989} and are necessary for deriving the algorithm later.

\begin{lemma}
\label{skvk_one}
For a Cantor basis $\bvec{v}_d$,
$s_k(v_k) = 1$ for $0 \leq k < d$
\end{lemma}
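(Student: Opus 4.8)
The plan is to prove, by strong induction on $k$, that $s_k(v_k)=1$, the inductive hypothesis being that $s_j(v_j)=1$ for every $j<k$. The base case $k=0$ is immediate: $W_0=\{0\}$ forces $s_0(x)=x$, and $v_0=u_0^0=1$, so $s_0(v_0)=1$.

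For the inductive step I would first extract the formal consequences of the hypothesis. Since $\bvec{v}_d$ is a basis we have $v_j\notin W_j$, so $W_{j+1}=W_j\sqcup(v_j+W_j)$; multiplying the two corresponding factorizations of $s_j$ and using that $s_j$ is linearized gives $s_{j+1}(x)=s_j(x)\bigl(s_j(x)-s_j(v_j)\bigr)$. For $j<k$ the hypothesis turns this, in characteristic $2$, into $s_{j+1}(x)=s_j(x)^2+s_j(x)=\psi(s_j(x))$ with $\psi(t):=t^2+t$. Together with $s_0(x)=x$ this yields $s_j=\psi^{\circ j}$ for every $j\le k$, hence the composition law $s_a\circ s_b=s_{a+b}$ whenever $a+b\le k$. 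I will also use the basis-independent fact that $W_{2^j}=\mathrm{span}\{v_0,\dots,v_{2^j-1}\}$ equals the subfield $\GF{2^{2^j}}$ (the $2^j$ elements $u_0^{a_0}\cdots u_{j-1}^{a_{j-1}}$ are a $\GF2$-basis of $\GF2(u_0,\dots,u_{j-1})=\GF{2^{2^j}}$), so that $b^{2^{2^j}}=b$ for every $b\in W_{2^j}$ and $s_{2^j}(t)=\prod_{a\in\GF{2^{2^j}}}(t-a)=t^{2^{2^j}}-t$.

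Now write $k=2^j+r$ with $0\le r<2^j$; reading off binary expansions, $v_k=u_j v_r$ and $v_r\in W_{2^j}$. \emph{Case $r\ge 1$:} here $2^j<k$, so the hypothesis gives $s_{2^j}(u_j)=s_{2^j}(v_{2^j})=1$; a short computation with the closed form for $s_{2^j}$, using $v_r^{2^{2^j}}=v_r$, shows $s_{2^j}(u_j v_r)=s_{2^j}(u_j)\,v_r=v_r$, and then the composition law $s_k=s_r\circ s_{2^j}$ yields $s_k(v_k)=s_r(v_r)=1$ by the hypothesis ($r<k$). \emph{Case $r=0$:} then $k=2^j$ and $v_k=u_j$; since $s_k=\psi^{\circ k}=\psi^{\circ(k-1)}\circ\psi=s_{k-1}\circ\psi$, we get $s_k(u_j)=s_{k-1}\bigl(u_j^2+u_j\bigr)$. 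At this point the defining relation of the Cantor basis enters: $u_j^2+u_j=u_0u_1\cdots u_{j-1}+L_j=v_{k-1}+L_j$, where $L_j$ is a sum of monomials $u_0^{b_0}\cdots u_{j-1}^{b_{j-1}}$ each missing at least one factor, hence $L_j\in\mathrm{span}\{v_0,\dots,v_{k-2}\}=W_{k-1}=\ker s_{k-1}$. Linearity of $s_{k-1}$ and the hypothesis $s_{k-1}(v_{k-1})=1$ then give $s_k(u_j)=s_{k-1}(v_{k-1})+s_{k-1}(L_j)=1$, closing the induction.

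The one genuinely non-formal ingredient is the $r=0$ step, where the Cantor relation is used, and I expect getting the algebra there to land precisely on ``$v_{k-1}$ modulo $W_{k-1}$'' to be the crux. The only other thing requiring care is the bookkeeping of the induction — every appeal to ``$s_j=\psi^{\circ j}$'' or ``$s_j(v_j)=1$'' must be for an index strictly below $k$ (for $r\ge1$ one invokes the indices $2^j<k$ and $r<k$; for $r=0$ one invokes $k-1$) — together with the identification $W_{2^j}=\GF{2^{2^j}}$ that legitimizes $b^{2^{2^j}}=b$ in the reduction.
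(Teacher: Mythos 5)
The paper states this lemma without proof, deferring to Cantor's original work, so there is no in-paper argument to compare against; your proof has to be judged on its own. It is correct. You run a coupled strong induction in which the hypothesis $s_j(v_j)=1$ for $j<k$ simultaneously yields the operator form $s_j=\psi^{\circ j}$ (hence the composition law $s_a\circ s_b=s_{a+b}$ for $a+b\le k$), and you then split on whether $k$ is a power of two. The two cases are handled by genuinely different facts: for $k=2^j+r$ with $r\ge1$ you exploit the subfield identity $W_{2^j}=\GF{2^{2^j}}$ so that $s_{2^j}(t)=t^{2^{2^j}}+t$ is $\GF{2^{2^j}}$-semilinear, giving $s_{2^j}(u_jv_r)=v_r\,s_{2^j}(u_j)=v_r$ and then $s_k(v_k)=s_r(v_r)=1$; for $r=0$ the Artin--Schreier defining relation $u_j^2+u_j=u_0\cdots u_{j-1}+L_j=v_{k-1}+L_j$ with $L_j\in W_{k-1}=\ker s_{k-1}$ closes it. All the index bookkeeping checks out (every invocation of the hypothesis or of $s_j=\psi^{\circ j}$ is at an index $\le k$ derived from indices $<k$, and $r<2^j\le k$). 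One remark: the paper's Lemma~\ref{Cantor_property} asserts $s_1(v_i)\in v_{i-1}+W_{i-1}$ for \emph{all} $i>0$, and if one takes that as given, the whole lemma collapses to a one-line induction $s_k(v_k)=s_{k-1}(s_1(v_k))=s_{k-1}(v_{k-1})+s_{k-1}(\alpha)=1$. Your argument instead only uses that relation in the power-of-two case, where it is literally the Cantor defining equation, and replaces it elsewhere by the subfield structure of $W_{2^j}$; that makes the proof self-contained rather than dependent on another unproved lemma, at the cost of the case split.
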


Given a function $f$, denote $f^{\circ i}$ as $\underbrace{\left ( f \circ f \circ \cdots \circ f\right )}_{i\ \text{times}}$, which is function composition $i$ times.
\begin{lemma}
The subspace polynomial with respect to a Cantor basis can be written as a recursive form:
\begin{align*}
s_0(x) &= x\\
s_1(x) &= x^2 +x \\
s_j(x) &= s_{j-1}(x) \cdot s_{j-1}(x-v_{j-1}) = s_{j-1}^2(x) + s_{j-1}(x) \\
&= s_1(s_{j-1}(x)) = s_1^{\circ j}(x)\\
\end{align*}
\end{lemma}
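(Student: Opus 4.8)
The plan is to prove each line of the recursion in turn, using only the definition $s_k(x)=\prod_{a\in W_k}(x-a)$, the nesting $W_{k-1}\subset W_k$, and Lemma~\ref{skvk_one}. For the base cases, $s_0(x)=\prod_{a\in W_0}(x-a)=x$ since $W_0=\{0\}$, and $s_1(x)=\prod_{a\in W_1}(x-a)=x(x-v_0)=x^2+v_0x$; in a Cantor basis $v_0=1$ (it equals the empty product $u_0^0\cdots$), so $s_1(x)=x^2+x$. Alternatively, and more in keeping with the rest of the argument, $s_1(v_0)=1$ by Lemma~\ref{skvk_one}, and since $s_1$ is monic of degree $2$ with $s_1(0)=0$, writing $s_1(x)=x^2+cx$ and evaluating at $v_0$ forces $c=1$.

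Next I would establish the product formula $s_j(x)=s_{j-1}(x)\cdot s_{j-1}(x-v_{j-1})$. The key observation is that $W_j=W_{j-1}\cup(v_{j-1}+W_{j-1})$, a disjoint union, because $v_{j-1}\notin W_{j-1}$ and $W_j$ has dimension one more than $W_{j-1}$. Hence
\[
s_j(x)=\prod_{a\in W_j}(x-a)=\prod_{a\in W_{j-1}}(x-a)\cdot\prod_{a\in W_{j-1}}(x-v_{j-1}-a)=s_{j-1}(x)\cdot s_{j-1}(x-v_{j-1}),
\]
where the last factor is recognized as $s_{j-1}$ evaluated at $x-v_{j-1}$ since translating the argument by $v_{j-1}$ permutes nothing but shifts each root. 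Then I would invoke the previous lemma (linearity of $s_{j-1}$) to write $s_{j-1}(x-v_{j-1})=s_{j-1}(x)-s_{j-1}(v_{j-1})$, and apply Lemma~\ref{skvk_one} with $k=j-1$ to get $s_{j-1}(v_{j-1})=1$. Over characteristic $2$ this gives $s_{j-1}(x-v_{j-1})=s_{j-1}(x)+1$, so
\[
s_j(x)=s_{j-1}(x)\bigl(s_{j-1}(x)+1\bigr)=s_{j-1}(x)^2+s_{j-1}(x).
\]

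Finally, for the composition identity, note that the previous display says exactly $s_j(x)=s_1\bigl(s_{j-1}(x)\bigr)$ because $s_1(y)=y^2+y$. Iterating, $s_j(x)=s_1\circ s_1\circ\cdots\circ s_1(x)=s_1^{\circ j}(x)$, the base of the induction being $s_0(x)=x=s_1^{\circ 0}(x)$. I expect the only genuinely delicate point to be justifying that the roots of $s_j$ really split as the disjoint union $W_{j-1}\sqcup(v_{j-1}+W_{j-1})$ — i.e., that $v_{j-1}\notin W_{j-1}$, which is where the linear independence of the Cantor basis (and $k\le d$) is used — and keeping track of the characteristic-$2$ sign conventions so that $x-v_{j-1}$ and $x+v_{j-1}$ may be used interchangeably; everything else is a short unwinding of definitions and the two cited lemmas.
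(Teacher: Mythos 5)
The paper does not prove this lemma itself; it explicitly defers to Cantor~\cite{Cantor:1989} for all the properties of $s_k$ listed in this subsection. Your argument is correct and is essentially the standard one: the coset decomposition $W_j = W_{j-1}\sqcup(v_{j-1}+W_{j-1})$ gives the product form, linearity of $s_{j-1}$ together with Lemma~\ref{skvk_one} collapses it to $s_{j-1}^2+s_{j-1}=s_1\circ s_{j-1}$, and iterating yields $s_1^{\circ j}$.

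One slip in the side remark on the base case: you claim ``$s_1(v_0)=1$ by Lemma~\ref{skvk_one}.'' That lemma states $s_k(v_k)=1$, so it gives $s_1(v_1)=1$ and $s_0(v_0)=1$, not $s_1(v_0)=1$; indeed $s_1(v_0)=0$ because $v_0\in W_1$ is a root of $s_1$. The lemma at $k=0$ does give $v_0=s_0(v_0)=1$, which is what you actually need, but the proposed evaluation of $s_1$ at $v_0$ cannot produce $1$ and the inference to $c=1$ does not follow. Your primary derivation --- that $v_0$ is the empty product $u_0^{0}u_1^{0}\cdots=1$ under Definition~\ref{Cantor_basis} --- is correct and already suffices, so this is a harmless redundancy to delete rather than a gap in the main argument.
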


\begin{lemma}
\label{Cantor_property}
Given $d$ a power of two and a Cantor basis $\bvec{v}_d$, then
$$
v_0 = 1
$$
$$
s_1(v_i) = v_i^2 + v_i = v_{i-1} + \alpha
$$
where $\alpha \in W_{i-1}$ for $i>0$.
\end{lemma}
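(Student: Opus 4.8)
The plan is as follows. The equality $v_0 = 1$ is immediate from Definition~\ref{Cantor_basis}, since the index $i=0$ has all bits $i_j = 0$ and $v_0$ is therefore the empty product. For the identity $s_1(v_i) = v_i^2 + v_i = v_{i-1} + \alpha$ with $\alpha \in W_{i-1}$, I will argue by strong induction on $i$, the engine being the ``Leibniz-type'' identity
\[ s_1(ab) = a^2\, s_1(b) + b\, s_1(a), \]
which holds in any field of characteristic $2$ because $a^2(b^2+b) + b(a^2+a) = a^2 b^2 + ab = (ab)^2 + ab$. This lets me peel one factor $u_\ell$ off the monomial $v_i = u_0^{i_0}\cdots u_{k-1}^{i_{k-1}}$ and reduce to a shorter product.

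For the base cases $i = 2^\ell$ (powers of two, including $i=1$) we have $v_i = u_\ell$, and Cantor's defining recurrence reads $u_\ell^2 + u_\ell = u_0 u_1 \cdots u_{\ell-1} + [\text{a sum of lower monomials}]$. The leading monomial $u_0\cdots u_{\ell-1}$ is exactly $v_{2^\ell - 1} = v_{i-1}$, while each ``lower monomial'' $u_0^{j_0}\cdots u_{\ell-1}^{j_{\ell-1}}$ (with some $j_k = 0$) equals $v_m$ for an index $0 \le m \le 2^\ell - 2 < i - 1$ and hence lies in $W_{i-1}$; so $s_1(v_i) \in v_{i-1} + W_{i-1}$.

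For the inductive step with $i$ not a power of two, write $i = 2^\ell + i'$ where $\ell$ is the position of the top bit of $i$ and $1 \le i' < 2^\ell$, so that $v_i = u_\ell v_{i'}$. The Leibniz identity gives $s_1(v_i) = u_\ell^2\, s_1(v_{i'}) + v_{i'}\, s_1(u_\ell)$; substituting $u_\ell^2 = u_\ell + s_1(u_\ell)$, the induction hypothesis $s_1(v_{i'}) = v_{i'-1} + \alpha'$ with $\alpha' \in W_{i'-1}$, and the base-case value $s_1(u_\ell) = v_{2^\ell-1} + \beta$ with $\beta \in W_{2^\ell - 1}$, I expand into a sum whose single ``top'' term is $u_\ell v_{i'-1} = v_{2^\ell + i' - 1} = v_{i-1}$, plus the correction terms $u_\ell\alpha'$, $\,s_1(u_\ell)(v_{i'-1} + \alpha')$, and $v_{i'}s_1(u_\ell)$.

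The crux is showing these correction terms lie in $W_{i-1}$, and this is where I expect the real work (and the only genuine obstacle) to be. The term $u_\ell\alpha'$ is a $\GF2$-combination of $u_\ell v_m = v_{2^\ell+m}$ with $m \le i'-2$, hence of basis vectors of index $\le i - 2$, so it lies in $W_{i-1}$. For the remaining two, the key observation is that $W_{2^\ell} = \mathrm{span}_{\GF2}\{v_0,\dots,v_{2^\ell-1}\} = \GF2(u_0,\dots,u_{\ell-1}) = \GF{2^{2^\ell}}$ is a subfield of $\GF{2^d}$ (a stage of Cantor's Artin--Schreier tower), hence closed under multiplication; since $s_1(u_\ell), v_{i'-1}, \alpha', v_{i'}$ all lie in $W_{2^\ell}$ (using $i' \le 2^\ell - 1$), the products $s_1(u_\ell)(v_{i'-1}+\alpha')$ and $v_{i'}s_1(u_\ell)$ lie in $W_{2^\ell}$, and because $i$ is not a power of two we have $i \ge 2^\ell+1$, hence $W_{2^\ell} \subseteq W_{i-1}$. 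Collecting the top term and these inclusions yields $s_1(v_i) \in v_{i-1} + W_{i-1}$, completing the induction. Beyond identifying $W_{2^\ell}$ as the correct ``error space'', everything else is index bookkeeping (e.g.\ $2^\ell + m \le i-2$ and $2^\ell - 1 \le i-2$), which all comes down to $i' \ge 1$.
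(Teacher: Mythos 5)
The paper does not give its own proof of this lemma; it states that the listed subspace-polynomial properties ``were proven in \cite{Cantor:1989}'' and moves on. So the relevant comparison is against Cantor's argument, and the main question is whether your blind derivation is sound --- which it is.

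Your proof is correct and self-contained. The $v_0=1$ claim is indeed immediate (empty product). For the main identity, the Leibniz-type rule $s_1(ab)=a^2 s_1(b)+b\,s_1(a)$ checks out in characteristic $2$ (the cross term $a^2b$ appears twice and cancels), and it is exactly the right tool to peel the top generator $u_\ell$ off $v_i=u_\ell v_{i'}$ with $i=2^\ell+i'$. The base case $i=2^\ell$ reads off Cantor's defining recurrence, correctly matching $u_0\cdots u_{\ell-1}$ with $v_{2^\ell-1}$ and placing the remaining monomials in $W_{2^\ell-1}$. In the inductive step you correctly isolate the leading term $u_\ell v_{i'-1}=v_{2^\ell+i'-1}=v_{i-1}$, and the three correction terms are disposed of by exactly the two mechanisms needed: (i) $u_\ell\,\alpha'\in\mathrm{span}\{v_{2^\ell+m}:m\le i'-2\}\subseteq W_{i-1}$ by index bookkeeping, and (ii) $s_1(u_\ell)(v_{i'-1}+\alpha')$ and $v_{i'}s_1(u_\ell)$ lie in $W_{2^\ell}$, which is the subfield $\GF{2^{2^\ell}}=\GF 2(u_0,\dots,u_{\ell-1})$ of the Cantor tower (hence closed under multiplication) and is contained in $W_{i-1}$ because $i\ge 2^\ell+1$ when $i$ is not a power of two. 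Identifying $W_{2^\ell}$ as the controlling subfield is the genuine insight here, and you stated and used it correctly. The only premise you lean on beyond the paper's statements is that $\{v_0,\dots,v_{2^\ell-1}\}$ spans $\GF{2^{2^\ell}}$, i.e.\ that truncations of a Cantor basis are Cantor bases of the intermediate fields; this is part of Cantor's construction and implicit in Definition~\ref{Cantor_basis}, so it is fair to invoke. Compared to the route in~\cite{Cantor:1989} and~\cite{gao2010additive}, which also proceed inductively over the binary expansion of $i$, your Leibniz-rule packaging makes the ``one generator at a time'' reduction particularly clean.
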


\begin{lemma}
\label{thm:subspace-vanishing-polynomial}
Given a Cantor basis $\bvec{v}_d$, $\forall 0 \leq j \leq k \leq d$.
$$s_j(v_k) + v_{k-j}  \in W_{k-j}$$
\end{lemma}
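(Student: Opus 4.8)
The plan is to prove the claim $s_j(v_k) + v_{k-j} \in W_{k-j}$ by induction on $j$, keeping $k$ (and the underlying Cantor basis) as a ``universal'' parameter, so that the induction hypothesis is available for all admissible indices simultaneously. The base case $j=0$ is immediate: $s_0(x) = x$, so $s_0(v_k) = v_k = v_{k-0}$, and $0 \in W_{k} = W_{k-0}$. For the inductive step, I would use the recursion $s_j = s_1 \circ s_{j-1}$ together with Lemma~\ref{Cantor_property}, which tells us that $s_1(v_i) = v_{i-1} + \alpha$ with $\alpha \in W_{i-1}$ for every $i > 0$, and also that $s_1$ is additive (being a linearized polynomial, by the first lemma of the subspace-polynomial subsection).

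First I would write $s_j(v_k) = s_1(s_{j-1}(v_k))$. By the induction hypothesis applied with $j-1$ in place of $j$, we have $s_{j-1}(v_k) = v_{k-(j-1)} + w$ for some $w \in W_{k-j+1}$. Applying $s_1$ and using additivity, $s_j(v_k) = s_1(v_{k-j+1}) + s_1(w)$. The first term is handled by Lemma~\ref{Cantor_property} (assuming $k-j+1 > 0$, i.e.\ $k \geq j$, which is within the hypothesis range; the boundary $k=j$ needs a quick separate check since then $k-j+1 = 1$ and indeed $s_1(v_1) = v_0 + \alpha$ with $\alpha \in W_0 = \{0\}$, consistent with $v_0 \in W_0$ being the target up to $W_0$): it equals $v_{k-j} + \alpha$ with $\alpha \in W_{k-j}$. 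So it remains to show $s_1(w) \in W_{k-j}$ for every $w \in W_{k-j+1}$; combined with the $\alpha$ term this gives $s_j(v_k) + v_{k-j} \in W_{k-j}$ as desired.

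The crux is therefore the auxiliary fact that $s_1$ maps $W_{i}$ into $W_{i-1}$ for each $i \geq 1$. This follows because $s_1$ is additive, $W_i = \mathrm{span}\{v_0,\ldots,v_{i-1}\}$, and by Lemma~\ref{Cantor_property} each generator satisfies $s_1(v_t) = v_{t-1} + \alpha_t \in W_{t} \subseteq W_{i-1}$ for $1 \leq t \leq i-1$, while $s_1(v_0) = s_1(1) = 1 + 1 = 0 \in W_{i-1}$. Hence $s_1(W_i)$ is spanned by elements of $W_{i-1}$, so $s_1(W_i) \subseteq W_{i-1}$. Applying this with $i = k-j+1$ yields $s_1(w) \in W_{k-j}$, closing the induction.

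I expect the main obstacle to be bookkeeping at the index boundaries rather than any deep difficulty: one must check that the hypotheses of Lemma~\ref{Cantor_property} (namely $i>0$) are satisfied each time it is invoked, handle the degenerate cases $j=0$, $j=k$, and $k=0$ cleanly, and confirm that all the subspace containments $W_{t} \subseteq W_{k-j}$ used in the span argument are valid under the constraint $0 \leq j \leq k \leq d$. None of these require real calculation, but they are the places where an off-by-one slip would break the argument, so the writeup should state explicitly at each step which index range is in force.
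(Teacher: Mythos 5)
Your proof is correct. The paper does not prove this lemma itself but cites Cantor~\cite{Cantor:1989} for it, and your induction on $j$ via the recursion $s_j = s_1 \circ s_{j-1}$, combined with the linearity of $s_1$ and the containment $s_1(W_i) \subseteq W_{i-1}$, is the natural and standard argument, with the boundary cases you flag all checking out.
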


\begin{lemma}
\label{thm:subspace-vanishing-polynomial-power-of-2}
For subspace polynomial\/ $s_k(x)$ with respect to a Cantor basis $\bvec{v}_d$.
$$
s_k(x) = \sum_{i=0}^k s_{k,i} x^{2^i}
$$
where $s_{k,i} \in \GF{2}$ for $0 \leq i \leq k$.

If k is a power of 2 then
$$
s_k(x) = x^{2^k} + x
$$
\end{lemma}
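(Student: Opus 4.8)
The plan is to prove the two assertions separately, both resting on the recursion $s_k = s_1^{\circ k}$ with $s_1(x) = x^2+x$ from the preceding lemma.

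For the coefficient claim I would induct on $k$. The base case $s_0(x)=x$ is immediate. For the step, use $s_k(x) = s_1(s_{k-1}(x)) = s_{k-1}(x)^2 + s_{k-1}(x)$: if $s_{k-1}\in\GF 2[x]$, then since squaring is the Frobenius endomorphism over $\GF 2$ and fixes $\GF 2$ pointwise, $s_{k-1}(x)^2 = \sum_j s_{k-1,j}^2\,x^{2^{j+1}} = \sum_j s_{k-1,j}\,x^{2^{j+1}}$ again has all coefficients in $\GF 2$, and adding $s_{k-1}(x)$ preserves this. Together with the already-known facts that $s_k$ is linearized (so only exponents that are powers of $2$ occur) and has degree $|W_k| = 2^k$, this yields $s_k(x) = \sum_{i=0}^k s_{k,i} x^{2^i}$ with every $s_{k,i}\in\GF 2$.

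For the second assertion, write $k = 2^t$ and induct on $t$. The base case $t=0$ is exactly $s_1(x) = x^2+x = x^{2^1}+x$. For the step, associativity of function composition gives $s_{2^t} = s_1^{\circ 2^t} = s_{2^{t-1}}\circ s_{2^{t-1}}$; substituting the hypothesis $s_{2^{t-1}}(x) = x^N + x$ with $N := 2^{2^{t-1}}$ and using that $y\mapsto y^N$ is additive in characteristic $2$,
\[
 s_{2^t}(x) \;=\; (x^N+x)^N + (x^N+x) \;=\; x^{N^2} + x^N + x^N + x \;=\; x^{N^2}+x,
\]
and $N^2 = 2^{2^t}$ finishes it. Conceptually, the reason this works is that when $k$ is a power of two the span $W_k$ of the first $k$ Cantor basis vectors is precisely the subfield $\GF{2^k}\subseteq\GF{2^d}$ — the $v_i$ with $i<k$ being exactly the square-free monomials in $u_0,\dots,u_{t-1}$, a basis of $\GF 2(u_0,\dots,u_{t-1})$ — so $s_k(x) = \prod_{a\in W_k}(x-a) = \prod_{a\in\GF{2^k}}(x-a) = x^{2^k}+x$.

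Neither induction hides a real difficulty; the only points that want a moment's care are the identity $s_{a+b} = s_a\circ s_b$ (used with $a=b=2^{t-1}$), which is just $s_j = s_1^{\circ j}$ plus associativity, and — in the alternative argument — checking that $W_{2^t}$ is the full subfield $\GF{2^{2^t}}$ rather than merely a subspace of the right dimension. That last point is the crux: for $k$ not a power of two $W_k$ need not be closed under multiplication, and the simple formula $x^{2^k}+x$ then genuinely fails, so the power-of-two hypothesis must be used somewhere and cannot be bypassed by pure formal manipulation.
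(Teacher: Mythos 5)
Your proof is correct. The paper does not present an in-text argument for this lemma; it is one of several facts about subspace polynomials with respect to a Cantor basis that the text states and attributes to Cantor's original 1989 paper, so there is no paper proof to compare against. Both of your inductions are sound: for the coefficient claim, the recursion $s_k(x)=s_{k-1}(x)^2+s_{k-1}(x)$ combined with Frobenius fixing $\GF{2}$ pointwise propagates $\GF{2}$ coefficients, and degree counting ($\deg s_k = |W_k| = 2^k$) together with linearity pins down the shape $\sum_{i=0}^k s_{k,i}x^{2^i}$. For the power-of-two case, $s_{2^t}=s_{2^{t-1}}\circ s_{2^{t-1}}$ (from $s_j = s_1^{\circ j}$ and associativity) and the additivity of $y\mapsto y^N$ in characteristic $2$ when $N$ is a power of $2$ give the telescoping you describe. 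Your conceptual aside is worth keeping in mind: for $k=2^t$ the span $W_k$ of the first $k$ Cantor basis vectors is exactly the subfield $\GF{2^{2^t}}$ (the $v_i$, $i<2^t$, being precisely the square-free monomials in $u_0,\dots,u_{t-1}$), so $s_k$ is the field polynomial $x^{2^k}+x$; this is the structural reason the power-of-two hypothesis is essential, and you rightly flag that for non-power-of-two $k$ the set $W_k$ is generally not multiplicatively closed and the clean formula fails.
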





\subsection{Polynomial basis}
Here we will introduce polynomial basis proposed in \cite{lch-afft} and denote it as \novelpoly. They propose an additive Fast Fourier transform given a polynomial represented with \novelpoly.


\begin{definition}\label{def:novelpoly}
Given a basis $\bvec{v}_d$ and its 
subspace polynomials $(s_0,s_1,...,s_{d-1})$ and $n=2^d$,
define its corresponding \novelpoly basis to be the polynomials $(X_k)_{k=0}^{n - 1}$
$$
X_k(x):= \prod \left(s_i(x)\right)^{b_i} \quad 
\mbox{ where } k=\sum_{i=0}^{d-1} b_i 2^i \mbox{ with } b_i \in \{ 0,1\}\enspace .
$$ and $X_0(x) = 1$.
\end{definition}
Since $\deg(s_i(x)) = 2^i$ for all $i$, $\deg(X_k(x)) =k$ for all $k$.

%

Thus, given any polynomial $P \in \GF{2^d}[x]_{< n}$, it can be represented with \novelpoly,
$$
P = p_0 X_0(x) + p_1 X_1(x) + p_2 X_2(x) + \ldots + p_{n-1} X_{n-1}(x)
$$
where all $p_i \in \GF{2^d}$.

To perform basis conversion between monomial basis and \novelpoly, we can simply recursively divide $s_k(x)$. Thus the cost of naive basis conversion is $O(n (\lg(n))^2)$ additions in $\GF{2^d}$. However, more efficient polynomial basis conversion exists with respect to Cantor basis which was proposed in \cite{gao2010additive} and \cite{Lin:BasisCvt:2016}. 
We show algorithm from \cite{Lin:BasisCvt:2016} in Algorithm \ref{alg:changeBasis2}. 
The algorithm only requires $O(n\lg(n)\lg(\lg(n)))$ additions in $\GF{2^d}$.
It is easy to see that for polynomial admits coefficients $\GF{2}$, basis conversion from monomial to \novelpoly can easily gain a factor of $d$ because addition in $\GF{2}$ cost $A(1)$ instead of $A(d)$.

\begin{algorithm}[!tbh]
\SetKwFunction{cvt}{BasisConversion}
\SetKwFunction{deg}{Degree}
\SetKwInOut{Input}{input}
\SetKwInOut{Output}{output}
\cvt{$f(x)$} : \\
\Input{ $ f(x) = f_0 + f_1 x + ... + f_{n-1} x^{n-1} $ }
\Output{ $ f(x) = g(X) = g_0 + g_1 X_{1}(x) + ... + g_{n-1} X_{n-1}(x) $ }
\BlankLine
\lIf { \deg{$f(x)$} $ \leq 1 $ }
{
  return $g(X) = f_0 + X_1 f_1$
}
Let $k = $ max $\{2^i$ :\deg{ $s_{2^i}(x)$ } $ \leq $ \deg{ $f(x)$ }\} .\\
Let $y = s_k(x)$. \\
Let $f(x) = h'(y) = q'_0(x) + q'_1(x) y + q'_2(x) y^2 + \cdots $ where coefficients of $h'(y)$ are polynomials $q'_i(x)$ whose degree $< 2^{k}$. \\
$h(Y) \gets $ \cvt{ $h'(y)$ } \\
Then we have $h(Y) = q_0(x) + q_1(x) X_{2^k} + q_2(x) X_{2^{k+1}} + \cdots$ \\
$g_i(X)  \gets$ \cvt{ $q_i(x)$} for all $q_i(x)$. \\
\Return {$ g_0(X) + g_1(X) X_{2^k} + g_2(X) X_{2^{k+1}} +\cdots $} \\
\caption{Basis conversion: monomial to \novelpoly constructed from Cantor basis}
\label{alg:changeBasis2}
\end{algorithm}

\subsection{Additive FFT}
Given a polynomial $P$ represented in \novelpoly, Lin, Chung and Han\cite{lch-afft} proposed a fast method to compute its additive Fourier transform.


Given a basis $\bvec{v}_d$ of finite field $\GF{2^d}$, we can construct the polynomial basis accordingly: $(X_0(x),X_1(x),\ldots, X_{2^d-1}(x))$.
Then given a polynomial of $P \in \GF{2^d}[x]_{<2^k}$ represented with \novelpoly $$
P(x) = p_0X_0(x) + p_1X_1(x) + \ldots + p_{2^k-1}X_{2^k-1}(x)
$$, we denote $\lchbtfy(k,P(x),\alpha)$ =
$ (P(\omega_i + \alpha))_{i=0}^{2^k-1}$, where 
$$
P(\omega_i + \alpha) = \sum_{0\leq j < 2^k} p_{j} X_{j}(\omega_{i} + \alpha)
$$ 
$k \leq d$ and $\alpha \in \GF{2^d}$.
Now, let $n_1 = 2^{k-1}$
%
%
%
%
%
%
%
%
%
%
\begin{align*}
&P(\omega_i + \alpha) = P(\omega_{n_1 \cdot i_1 + i_2} + \alpha)&\\
&=  \sum_{0 \leq j_2 < n_1} \sum_{0 \leq j_1 <2} p_{n_1 \cdot j_1 + j_2} X_{n_1 \cdot j_1 + j_2}(\omega_{n_1 \cdot i_1 + i_2} + \alpha) & \\
&= \sum_{0 \leq  j_2 < n_1} \Big( p_{ j_2} + s_{k-1}(\omega_{n_1 \cdot i_1 + i_2 } + \alpha) \cdot p_{n_1+j_2}\Big) X_{j_2}(\omega_{n_1 \cdot i_1 + i_2} + \alpha) & \\
&= \sum_{0 \leq  j_2 < n_1}\Big( p_{ j_2} + s_{k-1}(\omega_{n_1 \cdot i_1 } + \alpha) \cdot p_{n_1+j_2}\Big) X_{j_1}(\omega_{i_2} + (\alpha + \omega_{n_1 \cdot i_1})) &
\end{align*}
We can see that the \lchbtfy with input polynomial degree of $2^k-1$ can be computed using two \lchbtfy with input polynomial
of degree $2^{k-1}-1$ corresponding to $i_1=0$ and $1$. With above derivation, we get the algorithm \ref{alg:fft}.

%
%


\begin{algorithm}[!tbh]
\SetKwFunction{nfft}{\ensuremath{\texttt{AFFT}}}
\SetKwInOut{Input}{input}
\SetKwInOut{Output}{output}
\nfft{$k, P(x), \alpha$} : \\
\Input{ $ P(x) = p_0 X_0(x) + p_1 X_1(x) + ... + p_{2^k-1} X_{2^k-1}(x)$ , all $p_i \in \GF{2^d}$ \\
$\alpha \in \GF{2^d} $, $k \leq d$}
\Output{ $ (P( \omega_0 + \alpha ),P( \omega_1 + \alpha ), \ldots , P( \omega_{2^k-1} + \alpha) ) $ .}
\BlankLine
\lIf { $ k = 0 $ }{ return $ p_0 $ }
\tcp{Decompose $P(x) = P_0(x) + s_{k-1}(x) \cdot P_1(x) $.}
$P_0(x) \gets p_0X_0(x) + p_1X_1(x) + \ldots p_{2^{k-1}-1}X_{2^{k-1}-1}(x)$\\
$P_1(x) \gets p_{2^{k-1}}X_0(x) + p_{2^{k-1}+1}X_1(x) + \ldots p_{2^{k}-1}X_{2^{k-1}-1}(x)$\\
$Q_0(x) \gets P_0(x) + s_{k-1}(\alpha) \cdot P_1(x) $. \\
$Q_1(x) \gets Q_0(x) + s_{k-1}(v_{k-1}) \cdot P_1(x) $. \\

return \nfft{$ k-1,  Q_0(x) , \alpha $}$\|$\nfft{$ k-1, Q_1(x) , v_{k-1} + \alpha$} 
\caption{Addtive FFT in \novelpoly from \cite{lch-afft}}
\label{alg:fft}
\end{algorithm}
Note that if we use Cantor basis, then $s_{k-1}(\omega_{n_1}) =s_{k-1}(v_{k-1}) = 1$ by lemma \ref{skvk_one}.
%
%
Given $P \in \GF{2^d}[x]_{<n}$ represented in monomial basis and $n=2^m$, its additive Fourier transform $\AFT(P)$ can be computed as follow.
We first perform basis conversion to get $p_i$ such that $P(x)=p_0X_0(x) + p_1X_1(x) + \ldots + p_{2^m-1}X_{2^m-1}(x)$. Then we perform $\lchbtfy(m,P(x),0)$.
Thus, to compute \AFT(P) using $\lchbtfy$, the maximum depth of recursion is $m$, and the algorithm performs total $\frac{1}{2} n$ multiplications and $n$ additions in each depth of recursion. Therefore the cost of the algorithm is $\frac{1}{2}n \lg(n) (M(d)+2A(d)) $ where $n=2^m$ is the number of terms.



\section{Frobenius Additive Fourier Transform}
\label{sec:faft}
\subsection{Frobenius additive Fourier transform}
Let $P$ be a polynomial in $\GF{2}[x]$ and $\bvec{v}_d$ be a basis in $\GF{2^d}$. We define the Frobenius map $\phi: x \mapsto x^2$. Notice that
$$P(\phi(a)) = \phi(P(a))$$
for all $a \in \GF{2^d}$.

The core idea of the Frobenius Fourier transform is to evaluate a
minimal number of points and all other points can be computed by
applying Frobenius map $\phi$. This is because we now consider
polynomial in $\GF{2}[x] \subset \GF{2^d}[x]$. 
The set of those points is called a \emph{cross section} \cite{ffft}. Formally, given a
set $W \subseteq \GF{2^d}$, a subset $\Sigma \subseteq W$ is called a
\textit{cross section} of $W$ if for every $w \in W$, there exists
exactly one $\sigma \in \Sigma$ such that
$\phi^{\circ j}(\sigma) = w$ for some $j$.
Let $\bvec{v}_d$ denote a basis of $\GF{2^d}$. Given a polynomial $P \in \GF{2}[x]_{<n}$ where $n=2^m$, then the $\AFT(P)$ is the evaluation of the points in
$W_m = \{\omega_0,\omega_1,\omega_2,\ldots, \omega_{2^m-1}\}$.
To perform Frobenius additive Fourier transform, we partition $W_m$ into disjoint orbits by $\phi$. 
If there exists a subset $\Sigma$ of $W_m$ that contains exactly one element in each orbit, that is, $\Sigma$ is a cross section of $W_m$, then Frobenius mapping allows us to recover $\AFT(P)$ from the evaluations of $P$ at each of the points in $\Sigma$. 
We denote 
$$\{P(\sigma)  | \sigma \in \Sigma\}$$ 
the \textit{Frobenius additive Fourier transform} (\FAFT{}) of polynomial $P$.

To exactly evaluate a polynomial with the points in cross section and reduce the complexity of algorithm by a factor $d$ for discrete Fourier transform is certainly not easy as can be seen in \cite{ffft}.

However, when considering the additive Fourier transform proposed by Cantor, we will show that there exists a cross section such that we can naturally use truncated method (as in truncated FFT) to only evaluate those points. In other words, there exists a cross section suited for the structure of additive FFT and let us obtain a fast algorithm.


\subsection{Frobenuis map and Cantor basis}

Consider the field \GF{2^d} with Cantor basis $\bvec{v}_d$ for
$d$ a power of two.
We have $\phi(v_0) = v_0$, and
$\phi(v_i) = v_i^2 = v_i + v_{i-1} + \alpha$, where
$\alpha\in W_{i-1}$ for $i > 0$ from Lemma~\ref{Cantor_property}.
In this section, we will show how to explicitly construct a cross
section $\Sigma$ for \GF{2^d}.

We recall that the Frobenius map $\phi$ on \GF{2^d} generates the
(cyclic) Galois group \Gal{\GF{2^d}/\GF 2} of order
$[\GF{2^d}:\GF 2]=d$, which naturally acts on \GF{2^d} by taking
$\alpha\in\GF{2^d}$ to $\phi(\alpha)$.
The orbit of $\alpha$ under this action is thus
\[ \Orb\alpha=\left\{\sigma(\alpha):\sigma\in\Gal{\GF{2^d}/\GF
      2}\right\}. \]
\begin{lemma}
  Given a Cantor basis $\bvec{v}_d$, $\forall k > 0$,
  $\forall w\in W_{k+1}\setminus W_k$,
  \[ \big|\Orb w\big| = 2^{\rd k+1}. \]
\end{lemma}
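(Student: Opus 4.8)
The plan is to reduce the claim to a statement about field degrees and then sandwich $w$ between two explicit subfields of $\GF{2^d}$. Recall that the orbit of $w$ under $\phi\colon x\mapsto x^2$ is $\{w,w^2,w^4,\dots\}$, and its size is the least positive $j$ with $w^{2^j}=w$, that is, the least $j$ with $w\in\GF{2^j}$, which is exactly the degree $[\GF 2(w):\GF 2]$. Since $\phi^{\circ d}$ is the identity on $\GF{2^d}$, this $j$ divides $d$, hence is a power of two because $d$ is. The subfields I would use are furnished by Lemma~\ref{thm:subspace-vanishing-polynomial-power-of-2}: when $j$ is a power of two, $s_j(x)=x^{2^j}+x$, so $W_j$, being exactly the root set of $s_j$, equals $\{a:a^{2^j}=a\}=\GF{2^{j}}$.

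Now put $\ell=\rd k$, so that $2^\ell\le k<2^{\ell+1}$. For the upper bound, $2^\ell\le k$ gives $W_{k+1}\subseteq W_{2^{\ell+1}}=\GF{2^{2^{\ell+1}}}$, hence $w\in\GF{2^{2^{\ell+1}}}$ and $|\Orb w|$ divides $2^{\ell+1}$. For the lower bound, $2^\ell\le k$ also gives $W_{2^\ell}\subseteq W_k$, and $w\notin W_k$ by hypothesis, so $w\notin W_{2^\ell}=\GF{2^{2^\ell}}$; therefore $|\Orb w|$ does not divide $2^\ell$. The only divisor of $2^{\ell+1}$ that does not divide $2^\ell$ is $2^{\ell+1}$ itself, so $|\Orb w|=2^{\ell+1}=2^{\rd k+1}$.

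The only point requiring a moment of care, rather than a genuine obstacle, is verifying that the two bracketing fields actually lie inside $\GF{2^d}$, i.e. that $2^{\ell+1}\le d$: from $w\in W_{k+1}\setminus W_k$ we get $k+1\le d$, so $k<d$, and since $d$ is a power of two this forces $\rd k\le\lg d-1$, that is $2^{\ell+1}\le d$. With that in hand the argument is entirely routine bookkeeping.
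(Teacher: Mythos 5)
Your proof is correct, and it takes a cleaner route than the paper's. The paper's proof writes $w=v_k+\alpha$ with $v_k=u_\ell u_{\ell-1}^{j_{\ell-1}}\cdots u_0^{j_0}$, asserts that ``obviously the splitting field of $w$ is $\GF{2^{2^{\ell+1}}}=\GF 2(u_0,\ldots,u_\ell)$,'' identifies the stabilizer as $\langle\phi^{2^{\ell+1}}\rangle$, and finishes with orbit--stabilizer plus Lagrange. You instead extract from Lemma~\ref{thm:subspace-vanishing-polynomial-power-of-2} the identification $W_j=\GF{2^j}$ for $j$ a power of two, sandwich $w$ between $W_{2^\ell}$ and $W_{2^{\ell+1}}$, observe that the orbit size is a divisor of $2^{\ell+1}$ not dividing $2^\ell$, and conclude. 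The two proofs are at bottom establishing the same thing, namely $[\GF 2(w):\GF 2]=2^{\ell+1}$, but yours makes explicit what the paper buries in the word ``obviously'' (why $\GF 2(w)$ cannot be a proper subfield of $\GF{2^{2^{\ell+1}}}$), uses only the subspace-polynomial lemma rather than the explicit tower presentation of the $u_i$, and avoids invoking orbit--stabilizer in favor of elementary divisibility. One small slip: in the upper bound you write ``$2^\ell\le k$ gives $W_{k+1}\subseteq W_{2^{\ell+1}}$''; the containment $W_{k+1}\subseteq W_{2^{\ell+1}}$ of course follows from the other half of the inequality, $k<2^{\ell+1}$ (equivalently $k+1\le 2^{\ell+1}$), not from $2^\ell\le k$. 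The argument is otherwise sound as stated.
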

\begin{proof}
  Let $\ell=\rd k$.
  In this case, $2^\ell\leq k<2^{\ell+1}$, and
  $v_k=u_\ell u_{\ell-1}^{j_{\ell-1}}\cdots u_0^{j_0}$,
  $j_i\in\{0,1\}\forall 0\leq i<\ell$.
  Since $w\in W_{k+1}\setminus W_k$, we can write
  \[ w=v_k+\alpha=u_\ell u_{\ell-1}^{j_{\ell-1}}\cdots
    u_0^{j_0}+\alpha \] for some $\alpha\in W_k$.
  Obviously the splitting field of $w$ is
  $\GF{2^{2^{\ell+1}}}=\GF 2(u_0,u_1,\ldots,u_\ell)$, so the
  stabilizer of $w$ is the subgroup of \Gal{\GF{2^d}/\GF 2} generated
  by $\phi^{2^{\ell+1}}$.
  It follows immediately from the orbit-stabilizer theorem and
  Lagrange's theorem that \[ \big|\Orb w\big| = 2^{\rd k+1}. \]
\end{proof}

Moreover, we can further characterize the orbit of
$w\in W_{k+1}\setminus W_k$ using the following lemma.
\begin{lemma}
  \label{thm:orbit}
  Given a Cantor basis $\bvec{v}_d$, $\forall k>0$, consider the orbit
  of $w\in W_{k+1}\setminus W_k$ under the action of \Gal{\GF{2^d}/\GF
    2}.
  Then for all $j_i\in\{0,1\}$, $i=1,2,4,\ldots,\binrd k$, there is
  precisely one element $w'\in\Orb w$ such that
  $w'=v_k+j'_1v_{k-1}+\cdots+j'_kv_0\in W_{k+1}\setminus W_k$,
  $\forall j'_i\in\{0,1\}$, and $j'_i=j_i$ for
  $i=1,2,4,\ldots,\binrd k$.
\end{lemma}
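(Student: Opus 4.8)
The plan is to reformulate the statement as the bijectivity of a coordinate map and to deduce bijectivity from injectivity together with a cardinality count. Put $\ell=\rd k$, so the positions $1,2,4,\dots,\binrd k$ in the statement are $2^0,2^1,\dots,2^\ell$, which are $\ell+1$ in number. Since $\phi$ stabilizes each $W_m$ (fact (b) below), $\Orb w\subseteq W_{k+1}\setminus W_k$, so each $w'\in\Orb w$ has a unique expansion $w'=v_k+\sum_{i=1}^{k} c_i(w')\,v_{k-i}$ with $c_i(w')\in\{0,1\}$; define $\pi\colon\Orb w\to\{0,1\}^{\ell+1}$ by $\pi(w')=\bigl(c_1(w'),c_2(w'),c_4(w'),\dots,c_{2^\ell}(w')\bigr)$. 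The statement asserts exactly that $\pi$ is a bijection. By the preceding lemma ($|\Orb w|=2^{\rd k+1}$) we have $|\Orb w|=2^{\ell+1}=\bigl|\{0,1\}^{\ell+1}\bigr|$, so it suffices to prove that $\pi$ is injective.

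First I would record three facts, working with $\GF{2^d}$ as a vector space over \GF 2 on which $\phi$ and every $s_t=s_1^{\circ t}$ act linearly. (a) Since $s_1(x)=x^2+x$, we have $\phi=\mathrm{id}+s_1$; as $\mathrm{id}$ and $s_1$ commute, the binomial theorem gives $\phi^{\circ j}=\sum_{i\ge0}\binom ji s_i$ with coefficients read mod $2$, and in particular $\phi^{\circ 2^s}=\mathrm{id}+s_{2^s}$. (b) By Lemma~\ref{Cantor_property}, $\phi(v_i)\in v_i+W_i$ for every $i$, hence $\phi$ stabilizes $W_m$ for all $m\le d$ and therefore permutes $W_{k+1}\setminus W_k$. (c) For $0\le t\le k$ and every $u\in W_{k+1}\setminus W_k$ one has $s_t(u)\equiv v_{k-t}\pmod{W_{k-t}}$ --- i.e.\ $s_t$ acts on the ``top layer'' $W_{k+1}\setminus W_k$ as a shift by $t$ --- while $s_t(u)=0$ for $t>k$; this follows from Lemma~\ref{thm:subspace-vanishing-polynomial} applied to $v_k$ (giving $s_t(v_k)\in v_{k-t}+W_{k-t}$), combined with $s_t(v_i)\in W_{k-t}$ for every $i<k$ (again Lemma~\ref{thm:subspace-vanishing-polynomial}, or $s_t(v_i)=0$ once $t>i$).

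To prove injectivity, suppose $0\le a<b<2^{\ell+1}$ with $\pi(\phi^{\circ a}(w))=\pi(\phi^{\circ b}(w))$, and consider $\phi^{\circ a}(w)+\phi^{\circ b}(w)=\phi^{\circ a}\bigl((\mathrm{id}+\phi^{\circ(b-a)})(w)\bigr)$. Write $b-a=2^s m$ with $m$ odd; then $0\le s\le\ell$ because $b-a<2^{\ell+1}$, so $2^s\in\{1,2,4,\dots,2^\ell\}$. Using fact~(a), $\mathrm{id}+\phi^{\circ(b-a)}=\mathrm{id}+(\mathrm{id}+s_{2^s})^{\circ m}=\sum_{i\ge1}\binom mi s_{2^s i}$, whose lowest term is $\binom m1 s_{2^s}=s_{2^s}$ since $m$ is odd. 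By fact~(c), $s_{2^s}(w)\equiv v_{k-2^s}\pmod{W_{k-2^s}}$, while every $s_{2^s i}(w)$ with $i\ge2$ lies in $W_{k-2^s}$ (being $0$ or contained in $v_{k-2^s i}+W_{k-2^s i}\subseteq W_{k-2^s}$); hence $(\mathrm{id}+\phi^{\circ(b-a)})(w)\equiv v_{k-2^s}\pmod{W_{k-2^s}}$. Applying $\phi^{\circ a}$ and using facts~(a) and~(b) --- $\phi^{\circ a}$ stabilizes $W_{k-2^s}$ and carries $v_{k-2^s}$ into $v_{k-2^s}+W_{k-2^s}$ --- we get $\phi^{\circ a}(w)+\phi^{\circ b}(w)\equiv v_{k-2^s}\pmod{W_{k-2^s}}$, so its coefficient of $v_{k-2^s}$ equals $1$. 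But $2^s$ is one of the distinguished positions, so $\pi(\phi^{\circ a}(w))=\pi(\phi^{\circ b}(w))$ forces that coefficient to be $0$ --- a contradiction. Hence $\pi$ is injective, and by the cardinality count it is a bijection, which is the assertion.

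The step I expect to require the most care is fact~(c) together with the accounting that follows it: one must verify that in the expansion of $(\mathrm{id}+\phi^{\circ(b-a)})(w)$ the only contribution to the coefficient of $v_{k-2^s}$ comes from the leading term $v_{k-2^s}$ of $s_{2^s}(w)$ --- all the higher terms $s_{2^s i}(w)$ with $i\ge2$ must be shown to land inside $W_{k-2^s}$ --- and that post-composing with $\phi^{\circ a}$ cannot reintroduce that coefficient, since $\phi^{\circ a}$ alters $v_{k-2^s}$ only by an element of $W_{k-2^s}$ and preserves $W_{k-2^s}$. Once the ``shift by $t$'' description of $s_t$ on the top layer is pinned down, the remainder is routine bookkeeping with the chain $W_{k-2^s}\subset W_{k-2^s+1}\subset\cdots\subset W_{k+1}$.
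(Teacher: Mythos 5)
Your proof is correct, and it takes a genuinely different route from the paper's, though both hinge on the same two ingredients: the preceding cardinality lemma $|\Orb w|=2^{\rd k+1}$ and the ``shift'' identity $\phi^{\circ 2^s}(w)+w=s_{2^s}(w)\in v_{k-2^s}+W_{k-2^s}$ that follows from Lemmas~\ref{thm:subspace-vanishing-polynomial} and~\ref{thm:subspace-vanishing-polynomial-power-of-2}. The paper proves \emph{surjectivity} constructively: starting from $w$, it applies $\phi^{\circ 1},\phi^{\circ 2},\phi^{\circ 4},\dots,\phi^{\circ \binrd k}$ to build a binary tree of $2^{\rd k+1}$ orbit elements realizing every combination of the distinguished coordinates $j'_1,j'_2,j'_4,\dots$ (each step flips one new coordinate while fixing the ones already set), and then invokes the cardinality of the orbit and pigeonhole to conclude each combination occurs exactly once. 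You instead prove \emph{injectivity} of the coordinate map $\pi$: given $a<b$ in $\{0,\dots,2^{\ell+1}-1\}$ you factor $b-a=2^s m$ with $m$ odd, expand $\mathrm{id}+\phi^{\circ(b-a)}$ via $\phi=\mathrm{id}+s_1$ and the binomial theorem, and show the leading term $s_{2^s}$ forces a difference in the coefficient of $v_{k-2^s}$ that survives post-composition with $\phi^{\circ a}$; the count of the orbit then supplies surjectivity. Your argument requires the extra $2$-adic-valuation bookkeeping to handle an arbitrary pair $a,b$, but in return it avoids the informal ``if we continue, we can then obtain all combinations'' step in the paper and gives a tighter separation statement; the paper's constructive version is shorter to state and exhibits explicit representatives. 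Both are sound; they are mirror images of each other across the counting step.
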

\begin{proof}
  Let $\ell$ be a power of two.
  From Lemma~\ref{thm:subspace-vanishing-polynomial-power-of-2}, we
  have $\phi^{\circ\ell}(x)=x^{2^\ell}=s_\ell(x)+x$.
  From Lemma~\ref{thm:subspace-vanishing-polynomial}, we see that
  $\phi^{\circ \ell}(w)+w=s_\ell(w)\in W_{k-\ell+1}\setminus W_{k-\ell}$.
  That is, $\phi^{\circ \ell}(w)+w\in v_{k-\ell}+W_{k-\ell}$.
  Let $\ell=1$, $\phi^{\circ \ell}$  allows us to obtain $w$ and $\phi(w)$, one of which
  has $j'_1=0$ while other, $j'_1=1$ for any $w \in W_{k+1} \setminus W_k$.
  Now let $\ell=2$.
  We can use $\phi^{\circ \ell}$ to obtain $w$ and $\phi^{\circ 2}(w)$, one of which has $j'_2=0$ while
  the other, $j'_2=1$. Both $w$ and $\phi^{\circ 2}(w)$ have the same $j'_1$. 
  Similarly for $\phi(w)$ and $\phi^{\circ 2}(\phi(w))$.
  Let $\ell=4$. We can use $\phi^{\circ \ell}$ to obtain $w$ and $\phi^{\circ 4}(w)$, one of which has $j'_4=0$ while
  the other, $j'_4=1$ and both $w$ and $\phi^{\circ 4}$ have the same $j'_1,j'_2$.
  If we continue, we can then obtain all combinations of
  $j_i\in\{0,1\}$, for $i=1,2,4,\ldots,\rd k$.
  However, as $|\Orb w|=2^{\rd k+1}$, we see that each such
  combination can appear in \Orb w precisely once due to the
  pigeonhole principle.
\end{proof}

Now we can explicitly construct a cross section.
Let $\Sigma_0 = \{0\}$, and $\forall i>0$, let
\[ \Sigma_i=\left\{v_{i-1} + j_1v_{i-2} + \cdots +
    j_{i-1}v_0 : \begin{aligned}
        & j_k=0 \text{ if $k$ is a power of 2,} \\
        & j_k\in\{0,1\} \text{ otherwise.}
      \end{aligned}\right\} \]
\begin{theorem}
\label{main}
$\Sigma_i$ is a cross section of $W_i \setminus W_{i-1}$.
That is, $\forall i>0$, $\forall w\in W_{i}\setminus W_{i-1}$, there
exists exactly one $\sigma\in\Sigma_i$ such that
$\phi^{\circ j}(\sigma) = w$ for some $j$.
\end{theorem}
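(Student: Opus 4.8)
The plan is to deduce the theorem directly from Lemma~\ref{thm:orbit}, which already pins down the structure of a Frobenius orbit inside a layer $W_{k+1}\setminus W_k$. First I would dispatch the degenerate case $i=1$: here $W_1\setminus W_0=\{v_0\}$ with $v_0=1$, so $\phi(v_0)=v_0$ and $\Sigma_1=\{v_0\}$, and the statement is immediate. For $i\ge 2$ set $k=i-1\ge 1$ and fix an arbitrary $w\in W_i\setminus W_{i-1}=W_{k+1}\setminus W_k$. Since the $\phi$-orbits are exactly the orbits of the cyclic group \Gal{\GF{2^d}/\GF 2} acting on \GF{2^d}, they partition \GF{2^d}; hence ``$\phi^{\circ j}(\sigma)=w$ for some $j$'' is equivalent to $\sigma\in\Orb w$, and the theorem reduces to proving $|\Sigma_i\cap\Orb w|=1$.

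For existence I would invoke Lemma~\ref{thm:orbit} on $w$ with the all-zero selection $j_i=0$ for every power of two $i\le k$. It produces a (unique) $w'\in\Orb w$ of the form $w'=v_k+j'_1v_{k-1}+\cdots+j'_kv_0\in W_{k+1}\setminus W_k$ whose coordinates at all power-of-two indices vanish; comparing with the definition $\Sigma_i=\{v_{i-1}+j_1v_{i-2}+\cdots+j_{i-1}v_0 : j_k=0 \text{ when } k \text{ is a power of two}\}$ shows precisely that $w'\in\Sigma_i$, which settles existence. For uniqueness, if $\sigma_1,\sigma_2\in\Sigma_i\cap\Orb w$, then by definition of $\Sigma_i$ each $\sigma_t$ has the coordinate shape appearing in Lemma~\ref{thm:orbit} and has all its power-of-two coordinates equal to $0$; the ``precisely one'' clause of that lemma, applied to the orbit $\Orb w$ and the all-zero tuple, forces $\sigma_1=\sigma_2$.

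The only points demanding care are the index bookkeeping — reconciling ``the coefficient of $v_{i-1-m}$ is $j_m$, which is set to $0$ exactly when $m$ is a power of two'' in the definition of $\Sigma_i$ with the indices $i=1,2,4,\ldots,\binrd k$ of Lemma~\ref{thm:orbit} under the substitution $k=i-1$ — and confirming that the elements Lemma~\ref{thm:orbit} exhibits genuinely lie in $W_{k+1}\setminus W_k$ (this is stated in the lemma, and is in any case forced by it together with the orbit-size count $|\Orb w|=2^{\rd k+1}$, since the $2^{\rd k+1}$ elements with distinct power-of-two coordinate tuples already exhaust the orbit). I do not expect a real obstacle here: all of the Cantor-basis arithmetic and orbit geometry has been front-loaded into the preceding lemmas, so the proof is essentially a repackaging of Lemma~\ref{thm:orbit} specialized to the zero tuple, plus the elementary fact that group orbits partition the underlying set.
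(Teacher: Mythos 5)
Your proof is correct, and it takes a noticeably more direct route than the paper's. The paper argues via counting: it first shows that distinct elements of $\Sigma_i$ lie in distinct orbits (one direction of Lemma~\ref{thm:orbit}), observes that every orbit inside $W_i\setminus W_{i-1}$ has the same size $2^{\rd{(i-1)}+1}$, checks that $|\Sigma_i|\cdot 2^{\rd{(i-1)}+1}=2^{i-1}=|W_i\setminus W_{i-1}|$, and finishes with the pigeonhole principle. You instead fix an arbitrary $w$ and invoke Lemma~\ref{thm:orbit} with the all-zero power-of-two coordinate tuple to extract, in one step, the unique orbit representative whose power-of-two coordinates all vanish; recognizing that representative as precisely the element of $\Sigma_i\cap\Orb w$ gives both existence and uniqueness without any cardinality bookkeeping. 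Both proofs hinge entirely on Lemma~\ref{thm:orbit}, so this is more a difference of packaging than of underlying mathematics, but your version is cleaner: it avoids the counting and pigeonhole step, and it also handles the $i=1$ case explicitly, which the paper implicitly skips (Lemma~\ref{thm:orbit} is stated only for $k>0$, i.e.\ $i\ge 2$, though $i=1$ is trivial). What the paper's counting approach buys in exchange is a built-in consistency check — the arithmetic $2^{\rd{(i-1)}+1}\cdot|\Sigma_i|=2^{i-1}$ confirms that $\Sigma_i$ was defined with exactly the right number of free coordinates — which your argument achieves implicitly through the ``precisely one'' clause of the lemma. The index bookkeeping you flagged ($j_k=0$ for $k$ a power of two in $\Sigma_i$ versus $i=1,2,4,\ldots,\binrd{k}$ in the lemma under $k=i-1$) does line up exactly, as you anticipated.
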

\begin{proof}
  First, any two elements of $\Sigma_i$ are in different orbits for
  any $i$; this is a corollary of Lemma~\ref{thm:orbit}.
  Next, we know that $\forall w \in W_{i}\setminus W_{i-1}$,
  $\big|\Orb w\big| = 2^{\binrd{(i-1)}+1}$, and
  $\nexi{j}{w} \in W_{i}\setminus W_{i-1}$, $\forall j$.
  So each orbit generate by element in $\Sigma_{i}$ has the size
  $2^{\binrd{(i-1)}+1}$, and
  $ 2^{\binrd{(i-1)}+1} \cdot |\Sigma_i| = 2^{i-1} = |W_i \setminus
  W_{i-1}|$.
  By the pigeonhole principle, each element in $W_i \setminus W_{i-1}$
  must be in an orbit generate by exactly one element in $\Sigma_i$.
\end{proof}

\subsection{Frobenius additive Fast Fourier transform} 
With the theorem \ref{main}, a cross section of $W_m$ is $$
\Sigma_0 \cup \Sigma_1 \cup \Sigma_2 \cup \ldots \cup \Sigma_{m}
$$
Given $P(x) \in \GF{2}[x]_{<n}$ represented with \novelpoly and Cantor basis $\bvec{v}_d$ of field $\GF{2^d}$ where $n=2^m$, instead of computing $\AFT(P) = (P(\omega_0),P(\omega_1), \ldots, P(\omega_{2^m-1}))$, we only need to compute $\FAFT{n}(P) =  \{P(\sigma) : \sigma \in \Sigma_0 \cup \Sigma_1 \cup \Sigma_2 \cup \ldots \cup \Sigma_{m}\}$ and then use Frobenius map $\phi$ to get the rest.

Due to the structure of the Additive FFT, we can simply `truncate` to those points. 
In the original additive FFT (algorithm \ref{alg:fft}), each \FAFFT calls two \FAFFT routines recursively.
Those two \FAFFT call corresponds to evaluate points in $\alpha + W_{k-1}$  and $\alpha + v_{k-1} + W_{k-1}$.
We can omit one call and only compute $\alpha + W_{k-1}$ so we will not evaluate the points not in the cross section $\Sigma$ when $\Sigma \cap  (\alpha + v_{k-1} + W_{k-1}) = \emptyset$. 
Then we get the algorithm \ref{alg:fafft}.

It is easy to see that $\FAFFT(m,P(x),1,v_m)$ computes $\{P(x): x \in \Sigma_m\}$
because truncation happens when the $v_{m-l}$ component is zero for all points in $\Sigma_m$ and $l$ is a power of two. 
To compute \FAFT{}(P), we call $\FAFFT(m,P(x),0,0)$.
\begin{algorithm}[!tbh]
\SetKwFunction{nfft}{\ensuremath{\texttt{FAFFT}}}
\SetKwInOut{Input}{input}
\SetKwInOut{Output}{output} 
\nfft{$k, P(x), l, \alpha$} : \\
\Input{ 
        $k \in \mathbb{N} \cup \{0\}$.
        $ P(x) = p_0 X_0(x) + p_1 X_1(x) + ... + p_{2^k-1} X_{2^k-1}(x)$ : $p_i \in \GF{  2^{\binru{l}} }$ if $l> 0$, $p_i \in \GF{2}$ otherwise.\\
        $l \in \mathbb{N} \cup \{0\}$\\
        $ \alpha \in W_{k+l}\setminus W_{k} $ if $l>0$ , otherwise $\alpha = 0$.
}
\Output{
$P(\sigma)_{\sigma \in \Sigma}$ where 
$\Sigma  = (\Sigma_{0} \cup \Sigma_{1} \cup \ldots \cup \Sigma_{k+l})\cap (\alpha + W_{k}) $, 
}
\BlankLine
\lIf { $ k = 0 $ }{ \Return $ p_0 $ }
Decompose $P(x) = P_0(x) + s_{k-1}(x) \cdot P_1(x) $. \\
$Q_0(x) \gets P_0(x) + s_{k-1}(\alpha) \cdot P_1(x) $. \\
$Q_1(x) \gets Q_0(x) +  P_1(x) $. \\
\uIf { $l = 0$ }{
    \Return \nfft{$ k-1,  Q_0(x) , l, \alpha$} $\|$
    \nfft{$ k-1,  Q_1(x) , l+1, v_{k-1} + \alpha$}\\
}
\uElseIf {$l$ is a power of two}{
    \Return \nfft{$ k-1,  Q_0(x), l+1, \alpha$}\\
}\Else{
 \Return {
 \nfft{$ k-1,  Q_0(x) , l+1 ,\alpha$} $\|$ \nfft{$ k-1,  Q_1(x) , l+1, v_{k-1} + \alpha$}
 }
}
\caption{Frobenius Additive FFT in \novelpoly.}
\label{alg:fafft}
\end{algorithm}

The Fig. \ref{fig:butterfly} is a graphical illustration of $\FAFFT(5,f,0,0)$ routine which computes $\FAFT{32}(f)$ where $f=g_0X_0(x)+g_1X_1(x)+g_2 X_2(x) + \ldots +g_{31} X_{31}(x)$. 
It consists of 5 layers corresponding to the recursive depth in the pseudocode. 
Each grey box is a `butterfly unit` that performs a multiplication and an addition.
A butterfly unit has two inputs $a,b \in \GF{2^d}$. For normal butterfly unit with two output $a',b'$, it performs
\begin{align*}
a' &\gets a+b\cdot s_k(\alpha)\\
b' &\gets a'+b
\end{align*}
while the truncated one only output $a'$.
In the figure, we denote the $s_k(\alpha)$ in each butterfly unit $c_{i,j}$.
Initially, the input of butterfly unit, $g_0,g_1,\ldots, g_{31}$, are all in $\GF{2}$. 
But as it goes through layer by layer, because the multiplicands $c_{i,j}$ maybe in extension fields,
the bit size of input to the following butterfly unit grows larger.
For example, after second layer, the lower half of the input are in $\GF{2^2}$ because $c_{3,1}$ are in $(W_2 \setminus W_1) \subset \GF{2^2}$.
Then they go through butterfly unit with $c_{2,2} \in (W_3\setminus W_2) \subset \GF{2^4}$ and come to be in $\GF{2^4}$.
\begin{figure}[ht]
	\caption{Illustration of the butterfly network with $n=32$. 
	$f(0), f(1) \in \GF{2}, 	f(\omega_2) \in \GF{2^2}, 
	f(\omega_4),f(\omega_8), f(\omega_9) \in \GF{2^4}$ and
	$f(\omega_{16}),f(\omega_{18}) \in \GF{2^8}$}
	\label{fig:butterfly}
	\includegraphics[width=8.5cm]{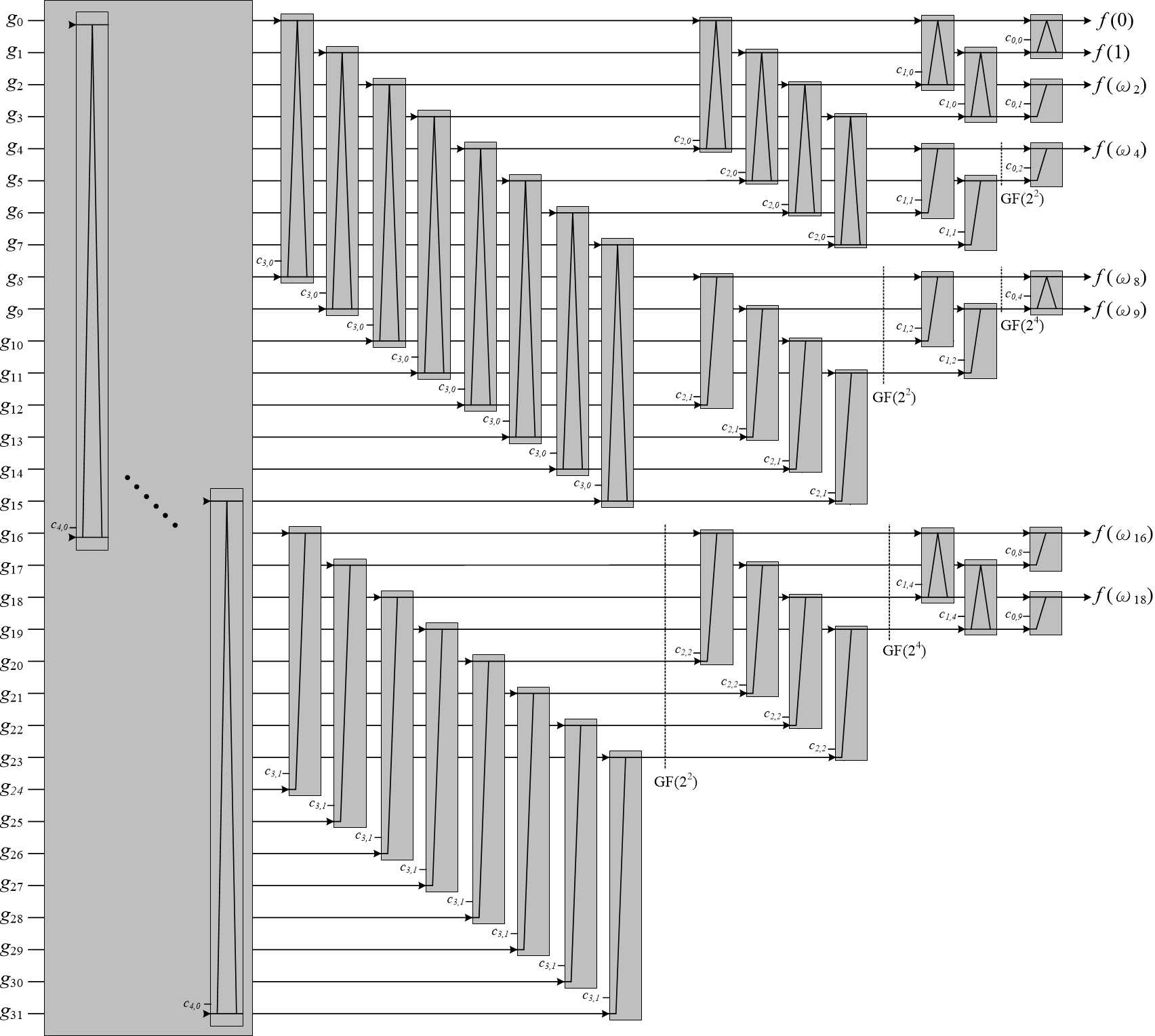}
\end{figure}

\subsection{Complexity Analysis}
In this section, we analyze the complexity of $\FAFFT$ in algorithm \ref{alg:fafft}.
Let $F(k,l)$ and $F_A(k,l)$ denote the cost of multiplication and addition to compute $\FAFFT(k,P(x),l,\alpha)$ 
for $P(x) \in \GF{2}[x]_{<2^k}$ and $\alpha \in W_{k+l}\setminus W_k$.

First, it is straightforward to verify that for all $\FAFFT(k',P'(x),l',\alpha')$
call during recursion:
\begin{itemize}
    \item  $\alpha' \in W_{k'+l'}\setminus W_{k'} $ if $l'>0$ , otherwise $\alpha' = 0$ 
    \item  $P'(x)=\sum p'_iX_i(x)$, $p'_i \in \GF{  2^{\binru{l'}} }$ if $l'> 0$, $p'_i \in \GF{2}$ otherwise.
    \item $ \!
      \begin{aligned}[t]
      s_{k-1}(\alpha') & \in (W_{l'+1}\setminus W_{l'}) \\
      &\in
      \begin{cases}
       u_{\lg{l'}} + \GF{l'} & \text{if } l' \text{ is a power of two}\\
       \GF{\binru{l'}} & \text{ otherwise}
      \end{cases}
      \end{aligned}
      $
\end{itemize}
Then we have
\[
    F(k,l)= 
\begin{cases}
    F(k-1,l)+F(k-1,l+1)+2^{k-1}(M(1)) & \text{if } l = 0\\
    F(k-1,l+1)+ 2^{k-1} (M(l))             & \text{if } l \text{ is a power of two}\\
    2 \cdot F(k-1,l+1)+ 2^{k-1}( M(\binru{l})) & \text{otherwise}
\end{cases}
\]

\begin{theorem}
\textnormal{(multiplication complexity)} Given $n = 2^m$, for $m+l \leq d$, d is a power of two.
Then we have
\[
    F(m ,l)\leq
\begin{cases}
    \frac{1}{2}(n \lg n \frac{M(d)}{d} ) & \text{if } l = 0\\
    \frac{1}{2}(n \lg n \frac{M(d)}{d}\binru{l})            & \text{otherwise}\\
\end{cases}
\]
\end{theorem}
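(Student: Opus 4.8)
The plan is to prove both cases simultaneously by induction on $m$ (equivalently, on the recursion depth $k$), carefully bookkeeping the parameter $l$ as it changes along the recursion tree. Since $l$ is never $0$ in any recursive call once it has become positive — and in the $l=0$ branch the two children have $l'=0$ and $l'=1$ — the bound for the $l=0$ case naturally reduces, after one unfolding, to an $l=0$ subcall of half the size plus an $l=1$ subcall of half the size, and the latter is handled by the "otherwise" branch of the claimed bound. So the real content is the inductive verification of the $l>0$ ("otherwise") bound, with the $l=0$ case following as a short corollary.

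First I would set up the base case $m=0$: $F(0,l)=0$ for all $l$ (the algorithm returns $p_0$ with no multiplications), so both bounds hold trivially. For the inductive step with $l>0$, I split into the two subcases of the recurrence. If $l$ is a power of two, then $F(m,l)=F(m-1,l+1)+2^{m-1}M(l)$; I apply the induction hypothesis to $F(m-1,l+1)$ — noting $l+1>0$ so we use the "otherwise" bound — to get $F(m-1,l+1)\leq \frac12\big(\frac n2 \lg\frac n2\big)\frac{M(d)}{d}\binru{l+1}$, and then I need $2^{m-1}M(l)$ to be absorbed into the slack between $\frac12 n\lg n\,\frac{M(d)}{d}\binru{l}$ and $\frac12\cdot\frac n2\lg\frac n2\,\frac{M(d)}{d}\binru{l+1}$. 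Here I use $\binru{l+1}=2\binru{l}$ when $l$ is a power of two (actually $\binru{l}=l$ and $\binru{l+1}=2l$), so the two terms $\frac12\cdot\frac n2\lg\frac n2\,\frac{M(d)}{d}\cdot 2l$ and $\frac12 n\lg n\,\frac{M(d)}{d}\,l$ differ by exactly $\frac n2\,\frac{M(d)}{d}\,l$, and I must check $2^{m-1}M(l)\leq \frac n2\,\frac{M(d)}{d}\,l$, i.e. $M(l)/l \leq M(d)/d$, which is the assumed monotonicity of $M(d)/d$ since $l\leq d$. In the "otherwise" subcase, $F(m,l)=2F(m-1,l+1)+2^{m-1}M(\binru{l})$, and now $\binru{l+1}=\binru{l}$; applying the hypothesis twice gives $2F(m-1,l+1)\leq \frac12\cdot\frac n2\lg\frac n2\,\frac{M(d)}{d}\binru{l}\cdot 2 = \frac12 \cdot \frac n2 \lg\frac n2 \cdot 2 \cdot \frac{M(d)}{d}\binru{l}$, leaving a gap of $\frac12 n\,\frac{M(d)}{d}\binru{l}$ (from $\lg n - \lg\frac n2 = 1$ contributing to one of the two halves), against which I charge $2^{m-1}M(\binru{l})$; again this reduces to $M(\binru{l})/\binru{l}\leq M(d)/d$, valid because $\binru{l}\leq \binru{d}=d$ (as $d$ is a power of two and $l\leq d$).

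Finally the $l=0$ case: unfold once to $F(m,0)=F(m-1,0)+F(m-1,1)+2^{m-1}M(1)$, bound $F(m-1,0)$ by induction with $\frac12\cdot\frac n2\lg\frac n2\,\frac{M(d)}{d}$, bound $F(m-1,1)$ by the just-proved $l>0$ bound with $\binru{1}=1$ giving $\frac12\cdot\frac n2\lg\frac n2\,\frac{M(d)}{d}$, sum to $\frac12\cdot\frac n2\lg\frac n2\,\frac{M(d)}{d}\cdot 2 = \frac12 n\lg\frac n2\,\frac{M(d)}{d}$, and observe the remaining slack to $\frac12 n\lg n\,\frac{M(d)}{d}$ is $\frac12 n\,\frac{M(d)}{d}$, which dominates $2^{m-1}M(1)=\frac n2 M(1)$ since $M(1)/1\leq M(d)/d$.

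The main obstacle I anticipate is not any single estimate but getting the constant-tracking exactly right across the $\binru{l}$ versus $\binru{l+1}$ distinction: the induction only closes because in the "power of two" branch the multiplicand size doubles while the recursion is single (one child), whereas in the "otherwise" branch the multiplicand size is constant while the recursion branches in two — these two effects must balance against the one unit of $\lg$ saved per level, and the bookkeeping has to confirm that the worst case over all root-to-leaf paths is still captured by the clean closed form. I would also double-check the edge condition $m+l\leq d$ is preserved under the recursion (each call decreases $k$ by one and increases $l$ by at most one, so $k+l$ is non-increasing), which is what guarantees every $M(\cdot)$ argument appearing stays $\leq d$ so the monotonicity assumption applies.
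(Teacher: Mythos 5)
Your proof is correct and follows essentially the same strategy as the paper's: induction on $m$ with the three cases ($l=0$, $l$ a power of two, $l>0$ otherwise) of the recurrence, each closed by the monotonicity of $M(d)/d$ applied with $\binru{l}\leq d$, and the key observation that $\binru{l+1}=2\binru{l}$ when $l$ is a power of two but $\binru{l+1}=\binru{l}$ otherwise. The only cosmetic differences are your choice of base case $m=0$ (the paper uses $m=1$) and your ordering of the $l=0$ case last instead of first, neither of which affects the argument.
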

\begin{proof}
We prove by induction. Consider $m=1$,
then $F(1,l) = M(l) \leq \frac{M(d)}{d} l$ is correct.

Assume $m=k-1$ and for any $l \leq d-m$, 
\[
    F(m,l) \leq 
\begin{cases}
    \frac{1}{2} m 2^m \frac{M(d)}{d}  & \text{if } l = 0\\
    \frac{1}{2} m 2^m \frac{M(d)}{d}\binru{l}          & \text{otherwise}\\
\end{cases}
\]
Then we check three cases:
first, $m=k$ and $l=0$:
\begin{align*}
F(k,l) &= F(k-1,0)+ F(k-1,1) +  2^{k-1} \cdot M(1) \\
       &=  \frac{1}{2}(k-1) 2^{k} \frac{M(d)}{d} +  2^k \cdot M(1) \\
       &\leq \frac{1}{2} k 2^k \frac{M(d)}{d}
\end{align*}
Second, $m=k$ and 
$l$ is a power of two: 
\begin{align*}
F(k,l) &= F(k-1,l+1)+  \cdot 2^{k-1} \cdot M(l) \\
       &=  (k-1) 2^k \frac{M(d)}{d} l +  2^{k-1} \cdot M(l) \\
       &\leq \frac{1}{2} (k-1) 2^k \frac{M(d)}{d} l +  2^{k-1} \frac{ M(d)}{d}l \\
       &= \frac{1}{2} k 2^k \frac{M(d)}{d} l
\end{align*}
Finally, $l>0$ and is not a power of two:
\begin{align*}
F(k,l) &= 2 \cdot F(k-1,l+1)+  2^{k-1} \cdot M(\binru{l}) \\
       &= \frac{1}{2} (k-1) 2^k \frac{M(d)}{d} \binru{l+1} +  2^{k-1} \cdot M(\binru{l}) \\
       &\leq \frac{1}{2} (k-1) 2^k \frac{M(d)}{d} \binru{l} + 2^{k-1} \cdot \frac{M(\binru{l})}{\binru{l}} \binru{l} \\
       &\leq \frac{1}{2} k 2^k \frac{M(d)}{d} l
\end{align*}
Note that we 
assume $\frac{M(l)}{l}$ is increasing in $l$. We complete the proof.

\end{proof}

For the cost of addition, it can be proved follow the same procedure above since each with $2A(d)$ instead of $M(d)$. (Note that $\frac{A(d)}{d}$ is constant)

\begin{theorem}
\textnormal{(addition complexity)} Given $n = 2^m$, for $m+l \leq d$, d is a power of two.
Then we have
\[
    F_A(m,l)\leq
\begin{cases}
    (n \lg n \frac{A(d)}{d} ) & \text{if } l = 0\\
    (n \lg n \frac{A(d)}{d}\binru{l})            & \text{otherwise}\\
\end{cases}
\]
\end{theorem}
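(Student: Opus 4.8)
The plan is to mirror the structure of the multiplication-complexity theorem just proved, replacing the per-butterfly cost $M(\cdot)$ by $2A(\cdot)$ throughout. First I would write down the recurrence for $F_A(k,l)$ that parallels the one for $F(k,l)$: in each of the three cases ($l=0$, $l$ a power of two, $l>0$ not a power of two) the recursive calls are exactly the same as in the multiplication recurrence, and the only change is that the work done at the top level of $\FAFFT(k,P(x),l,\alpha)$ to form $Q_0$ and $Q_1$ from $P_0,P_1$ costs $2^{k-1}$ additions for the combination plus $2^{k-1}$ additions for the scalar-multiply-then-add step, each addition being in the appropriate field, giving a contribution of $2^{k-1}\cdot 2A(\binru{l})$ (or $2^{k-1}\cdot 2A(1)$ when $l=0$). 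Concretely:
\[
    F_A(k,l)=
\begin{cases}
    F_A(k-1,l)+F_A(k-1,l+1)+2^{k-1}\cdot 2A(1) & \text{if } l = 0\\
    F_A(k-1,l+1)+ 2^{k-1}\cdot 2A(l) & \text{if } l \text{ is a power of two}\\
    2 F_A(k-1,l+1)+ 2^{k-1}\cdot 2A(\binru{l}) & \text{otherwise.}
\end{cases}
\]

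Next I would run the same induction on $m$. The base case $m=1$ gives $F_A(1,l)=2A(l)\le 2A(d)\,l/d$ since $A(d)/d$ is a positive constant, which is at most $n\lg n\,\tfrac{A(d)}{d}\binru{l}$ for $n=2$ once one checks the tiny numerics (here $n\lg n=2$). For the inductive step I would substitute the induction hypothesis into each of the three branches exactly as in the multiplication proof; the algebra is identical line-for-line with $\tfrac12\tfrac{M(d)}{d}$ replaced by $\tfrac{A(d)}{d}$ and the leftover single-layer term $2^{k-1}\cdot\tfrac{M(\cdot)}{\cdot}(\cdot)$ replaced by $2^{k-1}\cdot 2\tfrac{A(\cdot)}{\cdot}(\cdot)$. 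The key simplification, and the reason the bound here has no factor of $\tfrac12$ out front unlike the multiplication bound, is that $A(d)/d$ is \emph{constant} rather than merely increasing: we always have $A(\binru{l})/\binru{l}=A(d)/d$, so the single-layer contribution telescopes cleanly against the recursion depth $\lg n = m$, producing exactly $n\lg n\,\tfrac{A(d)}{d}\binru{l}$ (resp.\ $n\lg n\,\tfrac{A(d)}{d}$ when $l=0$) with no slack absorbed.

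The only point needing a little care — and the one I would flag as the main (minor) obstacle — is bookkeeping the field sizes so that each addition counted is genuinely an $A(\binru{l})$ and not something larger: one must invoke the invariants listed before the recurrence (namely that all coefficients $p'_i$ lie in $\GF{2^{\binru{l'}}}$ and that $s_{k-1}(\alpha')\in W_{l'+1}\setminus W_{l'}\subseteq\GF{2^{\binru{l'}}}$) to be sure the products $s_{k-1}(\alpha)\cdot P_1(x)$ and the sums $P_0+s_{k-1}(\alpha)P_1$, $Q_0+P_1$ all live in $\GF{2^{\binru{l}}}$, so that each of the $2^{k-1}$ coordinatewise operations per layer costs $A(\binru{l})$. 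Once that is in place the induction closes verbatim as above, and I would simply remark that the addition proof ``follows the same procedure'' as the multiplication one, exhibiting only the three substituted inductive-step inequalities for completeness.
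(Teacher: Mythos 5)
Your proposal follows the paper's intended approach exactly: the paper dismisses this theorem in one sentence (``it can be proved follow the same procedure above since each with $2A(d)$ instead of $M(d)$''), and you carry that program out, writing the explicit recurrence with $2^{k-1}\cdot 2A(\cdot)$ at each layer, rerunning the same three-case induction, and correctly invoking the field-size invariants to justify the $A(\binru l)$ cost per coordinate. The recurrence, base case, and inductive steps are all right, and the bookkeeping remark about why each addition really lives in $\GF{2^{\binru l}}$ is well placed.

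One small but genuine misattribution: you write that ``the reason the bound here has no factor of $\frac12$ out front \ldots\ is that $A(d)/d$ is constant.'' That is not quite right. The $\frac12$ disappears because the per-butterfly cost carries a factor $2$ (two additions, for $Q_0$ and $Q_1$, versus one multiplication), so the single-layer term is $2^{k-1}\cdot 2A(\cdot)=2^{k}A(\cdot)$ rather than $2^{k-1}M(\cdot)$; substituting into the induction, $(k-1)2^{k}\tfrac{A(d)}{d}\binru l + 2^{k}\tfrac{A(d)}{d}\binru l = k2^{k}\tfrac{A(d)}{d}\binru l$, and the $\frac12$ never appears. The constancy of $A(d)/d$ is only what the paper's ``increasing'' hypothesis degenerates to in the additive case --- it lets the estimate $A(\binru l)\le\tfrac{A(d)}{d}\binru l$ hold with equality, so the bound is tight, but the bound itself would still hold (just with slack) if $A(d)/d$ were merely non-decreasing. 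Apart from that side remark the argument is sound.
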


Given $P \in \GF{2}[x]_{<n}$ and $n$ a power of two, to compute $\FAFT{n}(P)$,
we call $\FAFFT(\lg(n), P, 0, 0)$.
Thus, the cost of compute $\FAFT{n}(P)$ is 
$\frac{1}{2} n\lg{(n)} \frac{M(d)}{d} + n\lg{(n)} \frac{A(d)}{d}$.
Compare with the additive FFT for $\GF{2^d}[x]_{<n}$ whose cost is $\frac{1}{2} (n \lg(n) (M(d)+2A(d))$,
we gain a speed-up factor $d$.




\subsection{Inverse Frobenius additive FFT}
The inverse Frobenius additive FFT is straight forward because for the butterfly unit with two output, it is easy to find its inverse. 

However, due to the truncation, it is not obvious how to inverse when $l$ is a power of two. Here we show that it is always invertible.
In the algorithm \ref{alg:fafft}, when $l$ is a power of two, it truncates and only compute \FAFT{} ~of $Q_0(x) = P_0(x) + s_{k-1}(\alpha)\cdot P_1(x)$.
To be able to inverse, we need to recover $P_0(x)$ and $P_1(x)$ from $Q_0(x)$. Note that  $s_{k-1}(\alpha) \in (W_{l+1}\setminus W_{l}) = v_{l}+W_{l}$ because $\alpha \in W_{k+l+1} \in W_{k+l}$ and lemma \ref{thm:subspace-vanishing-polynomial}.
Since we use Cantor basis, recall the definition \ref{Cantor_basis}, $v_{l} = u_{\lg{l}}$  when $l$ is a power of two.
We can rewrite the equation from the point of $\GF{2^l}[u_{\lg{l}}][x]$.
Let $s_{k-1}(\alpha) = u_{\lg l} + c$ and $c \in \GF{2^l}$, 
$$Q_0(x) = R_0(x) + R_1(x) u_{\lg l} = P_0(x) + (c+u_{\lg l})\cdot P_1(x)$$ where $R_0(x), R_1(x) \in \GF{2^l}[x]$.
Then we get 
\begin{align*}
    P_0(x) &= R_0(x) + R_1(x) \cdot c\\
    P_1(x) &= R_1(x) 
\end{align*}
Thus we can always recover $P_0(x)$ and $P_1(x)$ from $Q(x)$.
The full inverse Frobenius additive FFT algorithm is shown in algorithm \ref{alg:ifafft}.

\begin{algorithm}[!tbh]
\SetKwFunction{ifft}{\ensuremath{\texttt{IFAFFT}}}
\SetKwInOut{Input}{input}
\SetKwInOut{Output}{output} 
\ifft{$k, A, l, \alpha$} : \\
\Input{$A = P(\sigma)_{\sigma \in \Sigma}$ where $\Sigma  = (\Sigma_{0} \cup \Sigma_{1} \cup \ldots \cup \Sigma_{k+l})\cap (\alpha + W_{k}) $,\\
    $ \alpha \in W_{k+l}\setminus W_{k} $ if $l>0$ , otherwise $\alpha = 0$.
}
\Output{
        $ P(x) = p_0 X_0(x) + p_1 X_1(x) + ... + p_{2^k-1} X_{2^k-1}(x)$ : $p_i \in \GF{  2^{\binru{l}} }$ if $l> 0$, $p_i \in \GF{2}$ otherwise.\\
}
\BlankLine
\lIf { $ k = 0 $ }{ \Return  the only element in $A$ }
\uIf{$l=0$}{
Divide the set $A$ to $A_0$, $A_1$ \\
$Q_0(x) \gets$ \ifft{$ k-1,  A_0 , l, \alpha$} \\
$Q_1(x) \gets$ \ifft{$ k-1,  A_1 , l+1, v_{k-1} +\alpha$}\\
$P_1(x) \gets (Q_0(x) + Q_1(x)) $\\
$P_0(x) \gets Q_0(x) + s_{k-1}(\alpha)\cdot P_1(x)$\\
}
\uElseIf {$l = \binrd{(l)}$}{
    $Q(x) \gets $\ifft{$ k-1, A, l+1, \alpha$}\\
    Let $s_{k-1}(\alpha) = c + u_{\lg(l)}$\\
    Let $Q(x) = R_0(x) + u_{\lg(l)}\cdot R_1(x)$\\
    $P_0(x) \gets R_0(x) + R_1(x) \cdot c$ \\
    $P_1(x) \gets R_1(x)$ \\
}
\Else{
Divide the set $A$ to $A_0$, $A_1$ \\
$Q_0(x) \gets$ \ifft{$ k-1,  A_0 , l+1, \alpha$} \\
$Q_1(x) \gets$ \ifft{$ k-1,  A_1 , l+1, v_{k-1} + \alpha$}\\
$P_1(x) \gets Q_0(x) + Q_1(x)$ \\
$P_0(x) \gets Q_0(x) + s_{k-1}(\alpha)\cdot P_1(x)$\\
}
\Return {
  $P_0(x) + P_1(x) \cdot s_{k-1}(x)$
}
\caption{Inverse Frobenius Additive FFT in \novelpoly.}
\label{alg:ifafft}
\end{algorithm}



\section{Multiplications in $\GF{2}[x]$}
\label{sec:application}

To multiply a polynomial using Frobenius additive FFT is exactly the same as
conventional way: applying basis conversion to convert to \novelpoly, computing Frobenius additive FFT,
pair-wise multiplication, computing the inverse Frobenius additive FFT, then transforming back into
the original monomial basis.

\subsection{Multiplications of $\GF{2}[x]$ of small degree}
%

%
%
%
The record of minimal bit-operation to multiply polynomial over $\GF{2}[x]$ was set by 
\cite{DBLP:conf/crypto/Bernstein09} and \cite{DBLP:journals/jce/CenkH15}, which are both
based on Karatsuba-like algorithm.
Instead of Karatsuba-like algorithm, we use Frobenius additive FFT to perform multiplication in $\GF{2}[x]$. We implement a generator to generate code of binary polynomial multiplication with size $2^m$ where each variable is in $\GF{2}$. Since the multiplicands $s_k(\alpha)$ in FAFFT can all be precomputed. To reduce the number of bit operations, when multiplying a constant, we transform it into a matrix vector product over $\GF{2}$ and apply common subexpression algorithm as in \cite{paar}.
The generated code consists of XOR and AND expressions. The generator will be made public available on Github.

In figure \ref{fig:compareBitPoly}, we show the best results of polynomial multiplication over binary field.
\cite{DBLP:conf/crypto/Bernstein09} set the record for polynomial size up to 1000 in 2009. 
\cite{DBLP:journals/jce/CenkH15} improve the results up to 4.5\% for certain size of polynomial.
Since our Frobenius Additive FFT works with the polynomial size equal to power of two, we apply it to polynomial multiplication with polynomial size 256, 512, and 1024. We improve the best known results by
19.1\%, 29.7\%, and 41.1\% respectively.
To conclude the comparison, we set the record of size 231 to 256, 414 to 512, and 709 to 1024 just by above result.
To the best of our knowledge, it is the first time FFT-based method outperforms Karatsuba-like algorithm in such low
degree in terms of bit operation count.
In addition, for polynomial size 128, our FAFFT costs 11556 bit operations, comparing to 
the best previous is 11466 from  \cite{DBLP:journals/jce/CenkH15}. Our result is only 0.78\% slight slower in terms of bit operation count.

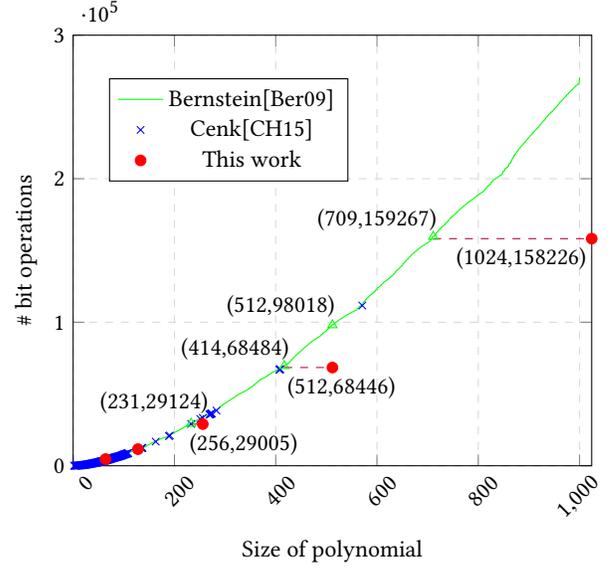
\begin{figure}[th!]
  \begin{center}
    \begin{tikzpicture}
      \begin{axis}[
          width=\linewidth, 
          grid=major, 
          grid style={dashed,gray!30}, 
          xmin=0, xmax=1024,
          ymin=0, ymax=300000,
          xlabel= Size of polynomial, 
          ylabel=\# bit operations,
          legend style={at={(0.3,0.90)},anchor=north}, 
          xticklabel style = {xshift=0.3cm, yshift=-0.1cm},
          x tick label style={rotate=45,anchor=east} 
        ]
        \addplot [color=green,]
        table[x=n,y=djb,col sep=comma] {bitOps.csv}; 
        
        \addplot [only marks,color=blue,mark=x,]
        table[x=n0,y=cenk,col sep=comma] {bitOps.csv}; 
        
        \addplot [only marks,color=red,mark=*,]
        table[x=n,y=wd,col sep=comma] {bitOps.csv}; 
        \addplot [only marks,color=green,mark=triangle,]coordinates{(512,98018)};
        
        \addplot [domain=711:1024,samples=100,color=purple,dashed]{158226};
        \addplot [domain=417:512,samples=100,color=purple,dashed]{68446};
        \addplot [domain=233:256,samples=100,color=purple,dashed]{29005};

        \legend{Bernstein\cite{DBLP:conf/crypto/Bernstein09}, Cenk\cite{DBLP:journals/jce/CenkH15}, This work}
        \addplot[only marks, color=green,mark=triangle,]
coordinates {(711,159622)(417,69574)(233,29354)};
        \node [below, yshift=0ex] at (axis cs:600,185000) {(709,159267)};
        \node [below, yshift=0ex] at (axis cs:320,95000) {(414,68484)};
        \node [below, yshift=0ex] at (axis cs:160,58000) {(231,29124)};
        \node [below, xshift=-6.9ex] at (axis cs:1024,158226) {(1024,158226)};
        \node [below, xshift=0ex] at (axis cs:530,68446) {(512,68446)};
        \node [below, xshift=0ex] at (axis cs:410,125018) {(512,98018)};
        \node [below, xshift=4ex] at (axis cs:256,29005) {(256,29005)};
      \end{axis}
    \end{tikzpicture}
    \caption{Number of bit operations for multiplication of $\GF{2}[x]$}
      \label{fig:compareBitPoly}

  \end{center}
\end{figure}

\subsubsection{Other FFT-based multiplication using Kronecker method}
In \cite{Auth256}, an optimized implementation of additive FFT
based on \cite{gao2010additive} was presented.
They show the cost of multiplication in $\GF{2^8}[x]_{<32}$ is 22,292 bit operations. 
We can use it to multiply polynomials of degree $128$ using Kronecker method. 
But our Frobenius additive FFT only requires 11556 for size 128, which is about
half of their results. The factor $2$ speedup compared with Kronecker method is expected as in \cite{ffft} because the total bit length is half when using Frobenius method.

\subsubsection{Application to Binary Elliptic Curve Cryptography}
There are several polynomial sizes that are in the interest of cryptography engineering community and its number of bit operation of multiplication were studied due to its application in binary elliptic curve\cite{DBLP:conf/crypto/Bernstein09} \cite{DBLP:journals/jce/CenkH15}.
These binary elliptic curve includes:
Koblitz curve \texttt{sect233k1}, \texttt{sect233r1} over $\GF{2}[x]/(x^{233} + x^{74} + 1)$,
curve \texttt{sect239k1}, \texttt{sect239r1} over $\GF{2}[x]/(x^{239} + x^{158} + 1)$,
and Edwards curve \texttt{BBE251} over $\GF{2}[x]/(x^{251} + x^7 + x^4 + x^2 + 1)$ according to
Standards for Efficient Cryptography Group (SECG) and \cite{DBLP:conf/crypto/Bernstein09}.
For the corresponding polynomial size $233$, $239$ and $251$, the Frobenius additive FFT method outperforms previous method in terms of number of bit operations.
Thus, our FAFFT can potentially applied to these curve in order to accelerate the computation.

 \subsection{Multiplication of $\GF{2}[x]$ of large degree}
 Another application is to implement multiplications of $\GF{2}[x]$ of large degree on modern CPU. Here we will implement a variant of the algorithm. The bit operation count is not a good predictor on modern CPU since they operate on 64-bit machine words and there are special instruction \texttt{PCLMULQDQ} designed for carryless multiplication with input size $64$. As in \cite{vanderhoeven:hal-01579863}, to implement on modern CPU, we have to take these into account. 
 To be able to use \texttt{PCLMULQDQ}, we change our algorithm to only compute a subset of cross section. The set of point we will use is 
 \begin{align*}
\{v_{i-1} + v_{i-2} j_1 + v_{i-3} j_2 + \ldots + v_{0} j_{i-1} : &j_k=0 \text{ if }k \leq 64,\\
 &j_k \in \{0,1\} \text{ otherwise}   \} \\
\end{align*}
where $ 64 < i\leq 128$.
By selecting this subset, we can mostly operate in $\GF{2^{128}}$ and mainly use the \texttt{PCLMULQDQ} instruction which performs carryless multiplication with input size 64.
We show the benchmark on Intel Skylake architecture in Table~\ref{tab:angel1} with comparison of other implementations.
In the table, for polynomial of size $n$ where $\log_2 (n/64) = 16,17,\ldots, 23$, our implementation of variant of FAFFT outperforms previous best results from \cite{vanderhoeven:hal-01579863,DBLP:journals/corr/abs-1708-09746,DBLP:conf/issac/HarveyHL16,gf2x}.

\begin{table}
\begin{center}
\begin{threeparttable}
\caption{Products in degree $< n$ in $\GF{2}[x]$ on Intel Skylake Xeon E3-1275 v5 @ 3.60GHz ($10^{-3}$ sec.)
}
\label{tab:angel1}
\begin{tabular}{l    r  r r  r  r  r  r  r  }
\toprule
   $\log_2 (n/64)$  & 16  & 17 &  18 & 19 & 20 & 21 & 22 & 23   \\ 
\midrule
This work, $\GF{2^{128}}$  & 9 & 20 &  41 & 88 & 192 & 418 & 889 & 1865 \\
FDFT \cite{vanderhoeven:hal-01579863} \tnote{c}  &  11 &  24  &  56   & 127   & 239   &  574  & 958 & 2465 \\ 
ADFT\cite{DBLP:journals/corr/abs-1708-09746}    & 16 & 34 &  74 & 175 & 382 & 817 & 1734 & 3666  \\ 
$\GF{2^{60}}$\cite{DBLP:conf/issac/HarveyHL16}\tnote{b}   &  22 &  51  &  116   & 217   & 533   &  885  & 2286 & 5301 \\ 
\texttt{gf2x} \tnote{a} \cite{gf2x}  & 23 & 51 & 111 & 250 & 507 & 1182 & 2614 & 6195  \\ 
\bottomrule
\hline
\end{tabular}
\begin{tablenotes}
  \small
  \item [a] Version 1.2.  Available from \url{http://gf2x.gforge.inria.fr/}
  \item [b] SVN r10663. Available from \url{svn://scm.gforge.inria.fr/svn/mmx}
  \item [c] SVN r10681. Available from \url{svn://scm.gforge.inria.fr/svn/mmx}
\end{tablenotes}
\end{threeparttable}
\end{center}
\end{table}




\section{Future Direction}
This is the first time FFT-based algorithm that outperforms Karatsuba-like
 algorithm for binary polynomial multiplication in such low degree.
 We hope our work can open up a new direction for the community interested in the number 
 bit operation of binary polynomial multiplication in small degree, as there are possible future work such as further reducing bit operations in \cite{Auth256} or using truncated method to eliminate the `jump` in the complexity when size is a power of two \cite{DBLP:conf/issac/Hoeven04}.





\bibliographystyle{alpha}
\bibliography{scientific,superstition} 

\end{document}
